
\documentclass[journal,twoside]{IEEEtran}

\usepackage{float}
\usepackage{stackengine}
\usepackage{setspace}
\usepackage{amssymb,amsfonts,latexsym}
\usepackage{graphicx}
\usepackage{subfigure}
\usepackage{psfrag}
\usepackage{algorithmic}
\usepackage{blindtext}
\usepackage{comment}
\usepackage{cite}
\usepackage{url}
\usepackage{times}
\usepackage{bbm}
\usepackage{bbm}
\usepackage{epsfig}
\usepackage{multirow}
\usepackage{longtable}
\usepackage{amsfonts}
\usepackage{amssymb}%
\usepackage{color}
\usepackage{mathrsfs}
\usepackage{bm}
\usepackage{booktabs}
\usepackage{threeparttable}
\usepackage[cmex10]{amsmath}
\usepackage{amsmath}
\usepackage[english]{babel}
\usepackage{mathbbold}
\usepackage{amsthm}
\usepackage{mathtools, nccmath}
\usepackage[linesnumbered,ruled]{algorithm2e}
\makeatother
\usepackage[table]{xcolor}
\usepackage{makecell}
\usepackage{stackengine}

\usepackage{stackengine}

%\newtheorem{defn}{Definition}
%\newtheorem{theorem}{Theorem}
%\newtheorem{prop}{Proposition}
%\newtheorem{conj}{Conjecture}
%\newtheorem{lemma}{Lemma}
%\newtheorem{coro}{Corollary}
%\newenvironment{Proof}%
%        {\par\noindent{\bf Proof.}}%
%        {{\hfill{\large$\Box$}}\smallskip}
\usepackage{stfloats}

\newtheorem{theorem}{\textbf{Theorem}}

\ifCLASSINFOpdf
% \usepackage[pdftex]{graphicx}
% declare the path(s) where your graphic files are
% \graphicspath{{../pdf/}{../jpeg/}}
% and their extensions so you won't have to specify these with
% every instance of \includegraphics
% \DeclareGraphicsExtensions{.pdf,.jpeg,.png}
\else
% or other class option (dvipsone, dvipdf, if not using dvips). graphicx
% will default to the driver specified in the system graphics.cfg if no
% driver is specified.
% \usepackage[dvips]{graphicx}
% declare the path(s) where your graphic files are
% \graphicspath{{../eps/}}
% and their extensions so you won't have to specify these with
% every instance of \includegraphics
% \DeclareGraphicsExtensions{.eps}
\fi

% correct bad hyphenation here
\hyphenation{op-tical net-works semi-conduc-tor}

%SARA: Sampling Adaptation and Resource Allocation for Accuracy-Guaranteed DNN Inference Services  via Deep Reinforcement Learning
\begin{document}
	%\doublespacing
			\title{Accuracy-Guaranteed Collaborative DNN Inference in Industrial IoT via Deep Reinforcement Learning}
	\author{\small Wen Wu,~\IEEEmembership{\small Member,~IEEE,}
	       Peng~Yang,~\IEEEmembership{\small Member,~IEEE,}
	      Weiting Zhang,~\IEEEmembership{\small Student Member,~IEEE,}\\
	       Conghao Zhou,~\IEEEmembership{\small Student Member,~IEEE,}  and
	        Xuemin~(Sherman)~Shen,~\IEEEmembership{\small Fellow,~IEEE}
%	\thanks{Manuscript received September 27, 2014; revised January 28, 2015.}
	\thanks{W. Wu, C. Zhou, and X. Shen are with the Department of Electrical and Computer Engineering, University of Waterloo, 200 University Avenue West, Waterloo, ON N2L 3G1, Canada (email: \{w77wu, c89zhou, sshen\}@uwaterloo.ca). 	}
	\thanks{P. Yang (Corresponding author) is with the School of Electronic Information and Communications, Huazhong University of Science and Technology, Wuhan, 430074, P.R. China (email: yangpeng@hust.edu.cn). }
	\thanks{W. Zhang is with the School of Electronic and Information Engineering, Beijing Jiaotong University, Beijing 100044, P.R. China  (email: 17111018@bjtu.edu.cn).}}
%	.

% 
%	\thanks{N. Cheng is with the School of Telecommunication, Xidian University, Xian 710071, China (e-mail: dr.nan.cheng@ieee.org).}
%	\thanks{N. Zhang is with Department of Computing Sciences, Texas A\&M University at Corpus Christi, TX, USA (e-mail: ning.zhang@tamucc.edu).}}	
\maketitle

\begin{abstract}

Collaboration among industrial Internet of Things (IoT) devices and edge networks is essential to support
computation-intensive deep neural network (DNN) inference services which require low delay and high accuracy. Sampling rate adaption which dynamically configures the sampling rates of industrial IoT devices according to network conditions, is the key in minimizing the service delay. In this paper, we investigate the collaborative DNN inference problem in industrial IoT networks. To capture the channel variation and task arrival randomness,
we formulate the problem as a constrained Markov decision process (CMDP). Specifically, sampling rate adaption, inference task offloading and edge computing resource allocation are jointly considered to minimize the average service delay while guaranteeing the long-term accuracy requirements of different inference services. Since CMDP cannot be directly solved by general reinforcement learning (RL) algorithms due to the intractable long-term constraints, we first transform the CMDP into an MDP by leveraging the Lyapunov optimization technique. Then, a deep RL-based algorithm is proposed to solve the MDP. To expedite the training process, an optimization subroutine is embedded in the proposed algorithm to directly obtain the optimal edge computing resource allocation. Extensive simulation results are provided to demonstrate that the proposed RL-based  algorithm can significantly reduce the average service delay while preserving long-term inference accuracy with a high probability.

%	propose a learning-based algorithm.
	
%	 The learning algorithm learns the pattern of time-varying condition due to user mobility and the task request pattern. Leveraging the learned information, the algorithm dynamically configure network parameters to achieve optimal long-term performance.

	\vspace*{3mm}
%	\begin{flushleft}
%		\textbf{\it Index Terms} --   Sampling rate adaption,  accuracy-guaranteed, collaborative DNN Inference, deep reinforcement learning.
%	\end{flushleft}	
\begin{IEEEkeywords}
 Sampling rate adaption,  inference accuracy, collaborative DNN Inference, deep reinforcement learning.
\end{IEEEkeywords}
	%-----------------------------------------------------------------------------------------------------------------------------------------------
\end{abstract}

%------------------------------------------------------------------------------------------------------------------------------------------------
\section{Introduction}
%\textbf{Can we find more novelty in our case? What kind of meaningful/interesting result can we obtained? } 
%: I think our novelty is two-fold. Firstly, sampling rate is not utilized in previous works, which has been considered in our work; Secondly, seldom works considers the accuracy performance, especially long-term accuracy performance; Thirdly, we propose a constrained MDP model for the problem, convert it to a MDP via Lyapunov framework. To further reduce complexity, resource allocation action is obtained via optimization solution. 
%\textbf{Novelty}
%\begin{itemize}
%	\item sampling rate can effectively be optimized
%	\item Collaborative DNN inference framework
%	\item wireless channel dynamics, traffic dynamics
%	\item characterize Accuracy model; long-term accuracy performance
%	\item multiple service
%\end{itemize}

%Initial idea: 

With the development of advanced neural network techniques and ubiquitous industrial Internet of Things (IoT) devices, deep neural network (DNN) is widely applied in extensive industrial IoT applications, such as facility monitoring and fault diagnosis~\cite{hu2017intelligent}. Industrial IoT devices (e.g., vibration sensors) can sense the industrial operating environment and feed sensing data to a DNN,  and then the DNN processes the sensing data and renders inference results, namely DNN inference. Although DNN inference can achieve high inference accuracy as compared to traditional alternatives {(e.g., decision tree)},  executing DNN inference tasks requires extensive computation resource due to tremendous multiply-and-accumulation operations~\cite{gobieski2019intelligence}. A device-only solution that purely executes DNN inference tasks at resource-constrained industrial IoT devices, becomes intractable due to prohibitive energy consumption and a high service delay. For example, processing an image using AlexNet incurs up to 0.45\;W energy consumption~\cite{chen2016eyeriss}. An edge-only solution which purely offloads large-volume sensing data to resource-rich edge nodes, e.g., access point (AP), suffers from an unpredictable service delay due to time-varying wireless channel~\cite{chekired2018industrial}. Hence, neither a device-only nor an edge-only solution can effectively support low-delay DNN inference services. 

%For example, typical power consumption of processing an image using AlexNet is 0.45 W 
Collaborative inference, which coordinates resource-constrained industrial IoT devices and  the resource-rich AP, becomes a de-facto paradigm to provide low-delay and high-accuracy inference services \cite{li2019edge}. Within the collaborative inference, sensing data from industrial IoT devices can be either processed locally or offloaded to the AP.  At industrial IoT devices, light-weight \emph{compressed} DNNs (i.e., neural networks are compressed without significantly decreasing their performance) are deployed due to constrained on-board computing capability, which saves computing resource at the cost of inference accuracy~\cite{han2015deep, teerapittayanon2016branchynet}. {At the AP, \emph{uncompressed} DNNs are deployed to provide high-accuracy inference services at the cost of network resources.} Through the resource allocation (e.g., task offloading) between industrial IoT devices and the AP, the overall service performance can be enhanced. 

{However, the sampling rate adaption technique that dynamically configures the sampling rates of industrial IoT devices, is seldom considered.} Through dynamically adjusting the sampling rates according to channel conditions and AP's workload, sensing data from industrial IoT devices can be compressed, thereby reducing not only the offloaded data volume, but also task computation workload. In our experiments, we implement AlexNet to conduct bearing fault diagnosis based on the collected bearing vibration signal from dataset~\cite{dataset}.\footnote{{The experiment is conducted on an open-source dataset~\cite{dataset}. This dataset collects the vibration signal of drive end bearings at a sampling rate of 48\;KHz, and  there are 10 types of possible faults.  }} As shown in Fig.~\ref{Fig:sampling_rate}, inference accuracy grows sub-linearly with the sampling rate. For example, when the sampling rate increases from 18\;KHz to 24\;KHz, the accuracy  increases from 95\% to 98.7\%. {Hence, when the channel condition is poor or edge computation workload is heavy, decreasing the sampling rate can reduce the offloaded data volume and requested computation workload, thereby reducing the service delay at the cost of limited inference accuracy. When channel condition is good and edge computation workload is light, increasing the sampling rate can help deliver a high-accuracy service with an acceptable service delay. Hence, sampling rate adaption can effectively reduce the service delay, which should be incorporated in the collaborative DNN inference.}

%Two types of DNN models are installed in the system: (1) y, which can be realized via various DNN model compressed techniques, such as model pruning, early-exit mechanism, etc. \cite{li2019edge}; (2) an  at the AP. 

%The sensing data from industrial IoT devices can be either processed locally by a compressed DNN model 

%1: edge intelligence is important for industrial IoT

%2. previous cloud based solution is too slow; We consider the data inference application, in which DNN models are deployed to provide services; 

%3. In the literature, edge-assisted data inference have been studied. 1) edge based solution: only edge assignment; 2) device-edge colloborative solution;  However, existing works seldomly consider two techniques: the source rate down sampling solution (device configuration); and DNN model sizing (edge configuraiton). The downsampling solution can reduce data size, leading to communciation saving, relieve tranmsision latency in heavy traffic case; DNN model sizing can reduce workload in heavy traffic case, save computing latency; 

The sampling rate adaption and resource allocation for collaborative DNN inference are entangled with the following challenges. Firstly, due to time-varying channel conditions and random task arrivals,  sampling rate and resource allocation should be dynamically adjusted to achieve the minimum service delay. Minimizing the long-term service delay requires the stochastic information of network dynamics. Secondly, in addition to minimizing the service delay, the long-term accuracy requirements should be guaranteed for different inference services. The long-term accuracy performance is determined by decisions of sampling rate adaption and resource allocation  over time, and hence the optimal decisions require future network information. To address the above two challenges, a reinforcement learning (RL) technique is leveraged to interact with the unknown environment to capture the network dynamics, and then a Lyapunov optimization technique is utilized within the RL framework to guarantee the long-term accuracy requirements without requiring future network information.

%4. Challenges: 1) dynamic channel condition and dynamic traffic;  device and edge should be adaptively configured according to environment dynamics; 2) long-term performance is imporant, without knowledge of future infomration makes it is difficult to make online decision; 3) jointly optimize multiple variables with the long-term costraints; 4) maximzing accuracy while satisfying delay constraint is difficult to-- technical challenges; 5) the inteaction between different parts should be a chllenge, since they are complex

In this paper, we investigate the collaborative DNN inference problem in industrial IoT networks.  \emph{Firstly}, we formulate the problem as a constrained Markov decision process (CMDP)  to account for time-varying channel conditions and random task arrivals. Specifically, sampling rates of industrial IoT devices, task offloading, and edge computation resource allocation are  optimized to minimize the average service delay while guaranteeing the long-term accuracy requirements of multiple services. {\emph{Secondly}, since traditional RL algorithms target at optimizing a long-term reward without considering policy constraints, they cannot be applied to solve CMDP with long-term constraints.} To solve the problem, we transform the CDMP into an MDP via the Lyapunov optimization technique. The core idea is to construct accuracy deficit queues to characterize the satisfaction status of the long-term accuracy constraints, thereby guiding the learning agent to meet the long-term accuracy constraints. \emph{Thirdly}, to solve the MDP, a learning-based algorithm is developed based on the deep deterministic policy gradient (DDPG) algorithm. Within the learning algorithm, to reduce the training complexity, edge computing resource allocation is directly solved via an optimization subroutine based on convex optimization theory, since it only impacts one-shot delay performance according to theoretical analysis. Extensive simulations are conducted to validate the effectiveness of  the proposed algorithm  in reducing the average service delay while preserving the long-term accuracy requirements.

\begin{figure}[t]
	\centering
	\renewcommand{\figurename}{Fig.}
	\includegraphics[width=0.35\textwidth]{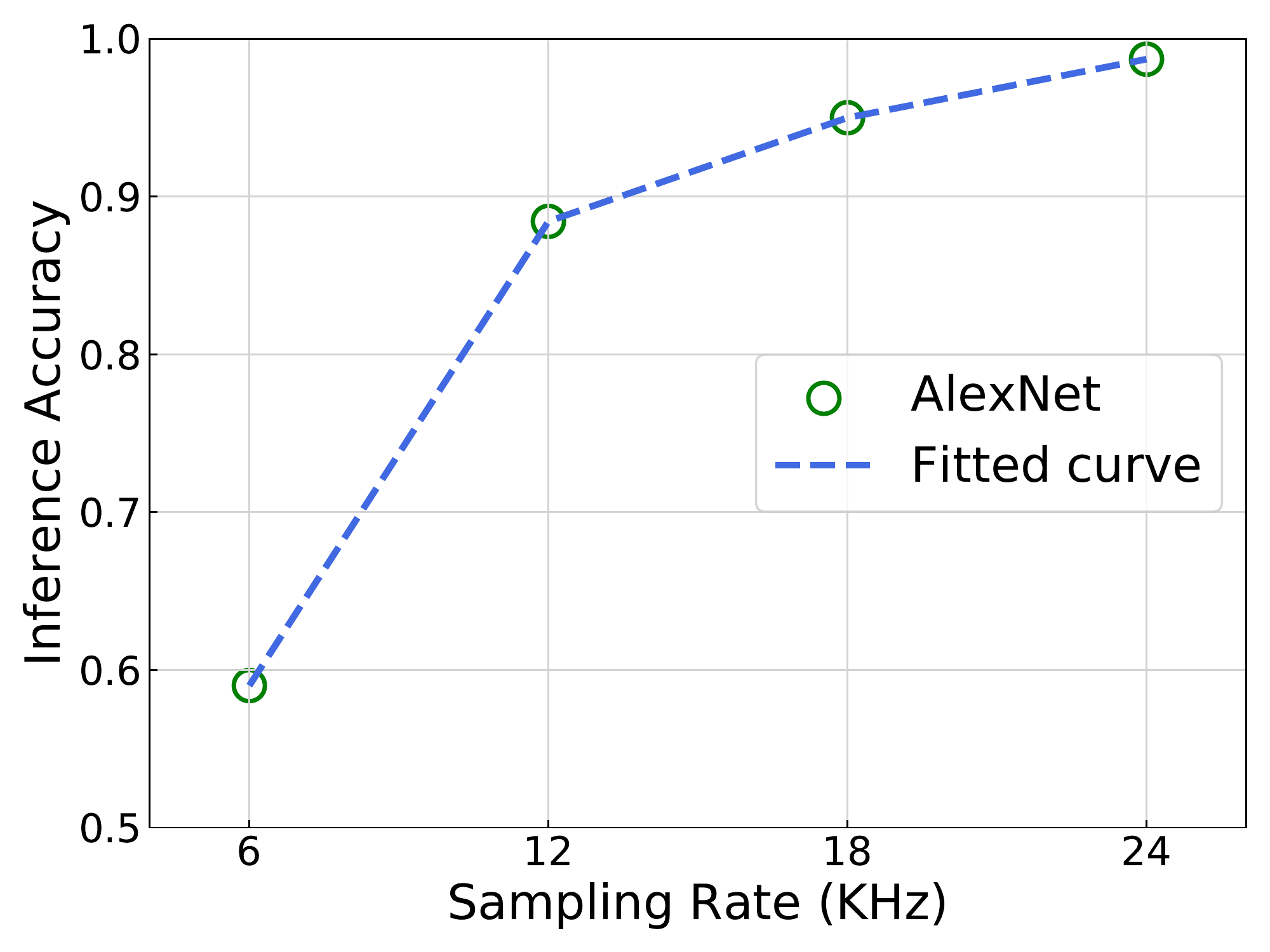}
%			\vspace{-0.3cm}
	\caption{Inference accuracy with respect to sampling rates on the bearing vibration dataset~\cite{dataset}. }
		\vspace{-0.2cm}
	\label{Fig:sampling_rate}
\end{figure}

%In order to reduce the complexity of learning process, we leverage the 

%Thirdly, to deal with with long-term constraint, we develop a xx algorithm to convert the problem xxx; (Maybe the second point)

 Our main contributions in this paper are summarized as follows:
\begin{itemize}
	\item We formulate the collaborative DNN inference problem as a CMDP, in which the objective is to minimize the average service delay while guaranteeing the long-term accuracy constraints;%, taking into account the device sampling rate, task offloading, and computation resource allocation;
	\item {We transform the CMDP into an MDP via the Lyapunov optimization technique which constructs accuracy deficit queue to characterize the satisfaction status of the long-term accuracy constraints;}
	\item We propose a deep RL-based algorithm to make the optimal sampling rate adaption and resource allocation decisions. To reduce the training complexity, an optimization subroutine is embedded in the proposed algorithm for the optimal edge computing resource allocation. 
	%We formulate the BA problem as a stochastic MAB problem, in which the objective is to sequentially select beams to maximize cumulative RSS within a certain period;
%	\item We prove that the mean RSS function over the beam space follows a multimodality structure in the multipath channel, which characterizes the correlation structure among nearby beams;	  
%	\item We propose a fast BA algorithm to accelerate beam search by exploiting the correlation structure and the prior knowledge on the channel fluctuation;
%	\item We derive a sublinear analytical upper bound on the cumulative regret, i.e.,  $O(\sqrt{T\log T})$, indicating the proposed algorithm is asymptotically optimal.
%	\item We demonstrate that the proposed BA algorithm substantially outperforms existing solutions via extensive simulation results. The HBA algorithm reduces the number of beam measurements such that results in a smaller BA latency, even with inaccurate prior knowledge. 
\end{itemize}

The remainder of this paper is organized as follows. Section~\ref{sec: related works} reviews related works. The system model and problem formulation are presented in Section~\ref{sec: system model}. Section~\ref{sec: Proposed solution} proposes a learning-based solution. Simulation results are given in Section~\ref{sec: simulation results}. Finally,  Section~\ref{sec: conclusion} concludes this paper.

%------------------------------------------------------------------------------------------------------------------------------------------------
\section{Related Work}\label{sec: related works}
DNN inference for resource-constrained industrial IoT devices has garnered much attention recently. A device-only solution aims to facilitate DNN inference services resorting to on-board computing resources. To reduce the computational complexity, DNN compression techniques are applied, such as weight pruning~\cite{han2015deep} and knowledge distillation~\cite{chen2017learning}. Considering the widely-equipped energy-harvesting functionality in IoT devices, Gobieski \emph{et al.} designed a light-weight DNN inference model, which can dynamically compress the model size in order to balance inference accuracy and energy efficiency~\cite{gobieski2019intelligence}. In another line of research, edge-assisted DNN inference solutions can provide high-accuracy inference services by utilizing powerful edge computing servers. To facilitate low-delay and accurate DNN-based video analytics, Yang \emph{et al.} proposed an online video quality and computing resource allocation strategy to maximize video analytic accuracy~\cite{yang2019edge}. Another inspiring work proposed a novel device-edge collaborative inference scheme, in which the DNN model is partitioned and deployed at both the device and the edge, and intermediate results are transferred via wireless links~\cite{li2019edge}. The above works can provide possible resource allocation solutions to enhance DNN inference performance. {Different from existing works, our work takes the sampling rate adaption of industrial IoT devices into account, aiming at providing accuracy-guaranteed inference services in dynamic industrial IoT networks. }

RL algorithms have been widely applied in allocating network resources in wireless networks, {such as service migration in vehicular networks~\cite{wang2019delay}, network slicing in cellular networks~\cite{shen2020ai}, content caching in edge networks~\cite{yang2018content}, and task scheduling in industrial IoT networks~\cite{wang2020online}.} Hence, RL algorithms are considered as plausible solutions to manage network resources for DNN inference services. However, DNN inference services require minimizing the average delay while satisfying the long-term accuracy constraints. {Traditional RL algorithms, e.g., DDPG, can be applied to solve MDPs, in which learning agents seek to optimize a long-term reward without policy constraints, while they cannot deal with constrained long-term optimization problems~\cite{altman1999constrained, liang2018accelerated}.} Our proposed deep RL-based algorithm can address long-term constraints within the RL framework by the modification of reward based on the Lyapunov optimization technique. In addition, an optimization subroutine is embedded in our algorithm to further reduce the training complexity.

%\subsection{Beam Alignment}

%1.	Compared with existing works on computing resource allocation for industrial IoT, learning-based solution can deal with the dynamic case
%
%2.	Existing works rely on edge or cloud based solution, while we consider device-edge collaborate computing paradigm since industry devices like robot has local computing capability
%
%3.	Unlike existing works on service placement which assumes the same type of services can be provided, we consider multiple levels of services provided with different levels of DNN models;
%
%4.	Industrial task can be down sampled to reduce the data size and workload of inference model, which impacts the offloading performance;

%5.	Energy consumption should be considered  

%------------------------------------------------------------------------------------------------------------------------------------------------
\section{System Model and Problem Formulation}\label{sec: system model}
\subsection{Network Model}
%channel state information (CSI), queue state information (QSI), 
\begin{figure}[t]
	\centering
	\renewcommand{\figurename}{Fig.}
	\includegraphics[width=0.48 \textwidth]{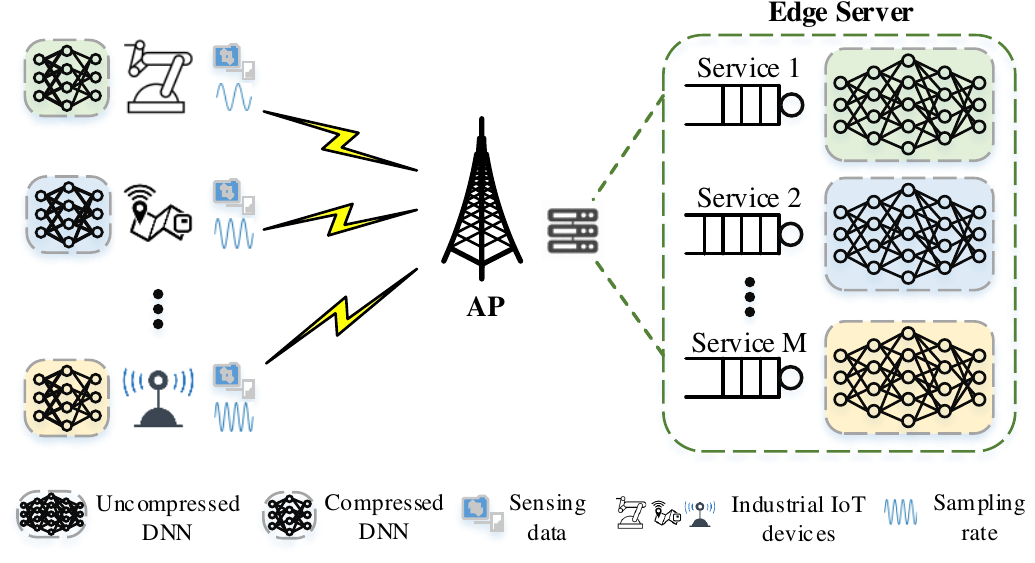}
%			\vspace{-0.3cm}
	\caption{The collaborative DNN inference framework for industrial IoT devices. }
		\vspace{-0.2cm}
	\label{Fig:architecture}
\end{figure}
As shown in Fig. \ref{Fig:architecture}, we consider a wireless network with one AP to serve multiple types of industrial IoT devices. The AP is in charge of collecting network information and resource orchestration within the network. Consider $M$ types of inference services, denoted by a set $\mathcal{M}$, {such as facility fault diagnosis and facility monitoring services.} Taking the facility fault diagnosis service as an example, vibration sensors installed on industrial IoT devices sense the operating conditions at a sampling rate, and feed the sensed vibration signal into a DNN, then the DNN diagnoses the facility fault type. The set of industrial IoT devices subscribed to service $m$ is denoted by $\mathcal{N}_m$, and the set of all industrial IoT devices is denoted by $\mathcal{N}= \cup_{m\in \mathcal{M}}\mathcal{N}_m$. In the collaborative inference framework, two types of DNNs are deployed.  One is a compressed DNN, which is deployed at industrial IoT devices. {The compressed DNN can be implemented via the weight pruning technique, which  prunes less-important weights to reduce computational complexity while maintaining similar inference accuracy~\cite{han2015deep}.}  The other is an uncompressed DNN, which is deployed at the AP. In this way, $M$ types of uncompressed DNNs share the edge computing resource to serve different inference requests. {Important notations are summarized  in Table~\ref{Tab:Variables_and_notations}. }

The collaborative DNN inference framework operates in a time-slotted manner. Let $t$ denote the time index, where $t\in \mathcal{T}=\{1,2,..., T\}$. The detailed procedure is given as follows.
\begin{enumerate}
	\item  Sampling rate selection:  Industrial IoT devices first select their sampling rates according to  channel conditions and computation workloads. The set of candidate sampling rates is denoted by $\mathcal{K}=\{\theta_1, \theta_2,..., \theta_K\}$, where $\theta_K$ denotes the raw sampling rate. {We assume the sampling rate in $\mathcal{K}$ increases linearly with the index, i.e., $\theta_k=k\theta_K/{K}$.}  Let $\mathbf{X}^t$ denote the sampling rate decision matrix in time slot $t$, whose element $x_{n,k}^t=1$ indicates industrial IoT device $n\in \mathcal{N}$ selects the  $k$-th sampling rate. 
	\item Task processing: The sensing data from industrial IoT devices within a time slot is deemed as a computation task, which can be either offloaded to the AP  or executed locally. Let $\mathbf{o}^t \in \mathbb R^{|\mathcal N| \times 1}$ denote the offloading decision vector in time slot $t$, whose element $o_n^t=0$ indicates offloading the computation task from industrial IoT device $n$. Otherwise, $o_n^t=1$ indicates executing the computation task locally.
%	\item Task execution: Computation tasks can be processed at industrial IoT devices or AP computing servers. Based on offloading decisions, computation tasks are placed into different queues of DNN models.
\end{enumerate}

\begin{table}[t]
	\small
	\centering
	\caption{{Summary of notations}.}
	\label{Tab:Variables_and_notations}
	\vspace{-0.1cm}
	\begin{tabular}{l l l}
		\hline
		\hline
		\textbf{{Notation}} & \textbf{{Description}} \\
		\hline
		{$A_m$} & {Achieved instantaneous accuracy of service $m$}\\
		{$A_m^{th}$} & {Long-term accuracy requirement of service $m$}\\
		{$B^{t}$} & {Local computing queue backlog in time slot $t$}  \\ 
		{$\mathbf{c}$} & {Computing resource allocation decision vector}\\	
		{$D$} & {Service delay}\\
%		 \textcolor{blue}{	$h_m$} & \textcolor{blue}{Inference accuracy of the $m$-th DNN  model}\\ 
		{$L\left(\cdot\right)$}&{Lyapunov function}\\
		 {$\mathbf{o}$ }& {Task offloading decision vector} \\
		 {$Q^{t}$ }&{Edge computing queue backlog in time slot $t$ }\\

		 {$V$ }& {Parameter to balance delay and accuracy requirement}\\
		 {$\mathbf{X}$ }& {Sampling rate selection decision matrix of all  devices}\\
		 {$Z^{t}$ }& {Accuracy deficit queue backlog in time slot $t$ }\\
%		 \textcolor{blue}{$\mathbf{x}_n$} &\textcolor{blue}{  Sampling rate decision  of the $n$-th  device}\\
%		\textcolor{blue}{ $\Delta\left(\cdot\right)$} & \textcolor{blue}{One-shot Lyapunov drift }\\
%		\textcolor{blue}{$\theta_k$} & \textcolor{blue}{The $k$-th sampling rate}\\
		{$\xi_n$}& {Raw task data size of device $n$}\\ 
		{$\eta_m$ }&{Task computation intensity of service $m$}\\
	{$\lambda_n$ }&{Average task arrival rate of device $n$}\\
		{$\zeta\left(\mathbf{x}_n^t\right)$ }& {Task data size of device $n$ in time slot $t$}\\
		{$\Psi^t$ }& {Amount of dropped tasks in computing queues}\\
		\hline
		\hline
	\end{tabular}
	\vspace{-0.2cm}
\end{table}

%In the following, the average service delay and average inference accuracy are analyzed in Section \ref{sec:Service Delay Model} and Section \ref{sec:Inference Accuracy Model}, respectively.

%\subsection{User mobility/Communication Model}
%In mobility scenario, the IoT device is always covered by the same AP, the channel condition between IoT device and AP can be can be divided into two categories: light-of-sight (LOS) and non-light-of-sight (NLOS). 
% Let $B_{n}^t$ denote the wireless link state of the $n$-th IoT device at time slot $t$, i.e., 
%\begin{equation}
%H_{n}^t=
%\begin{cases}
%1, \hfill   &\text{LOS }; \\
%2, \hfill &\text{NLOS};\\
%%3, \hfill &\text{Outage}.
%\end{cases}
%\end{equation}
%
%Based on real-time measurements, the variation of channel condition can be characterized by a Markov chain model, whose transition matrix is denoted by $\mathbf{P}\in \mathbb{R}^{2\times 2}$. Each element  $P_{i,j}=\mathrm{Pr}\left(B_{n}^{t+1}=j |B_{n}^{t}=i \right)$ represents corresponding transition. A large value of $P_{i,i}, \forall i$ means a relatively stable wireless channel.

%markov chain figure, duration 

\subsection{Service Delay Model}\label{sec:Service Delay Model}

A computation task can be either processed locally or offloaded to the AP. In what follows, we analyze the service delay in these two cases.
%Let $\zeta_n^t\left(x_{n,k}^t\right)\delequal A_n^t\sum_{k=1}^{K} x_{n, k}^t\xi_k$ represents the traffic from device $n$. We assume tasks are separable with advanced tasks separation techniques, such as MAP reduce \cite{emm}.  Let a variable $o_n^t$ denote the portion of tasks have been offloaded to AP for processing. The offloading vector is given by $\mathbf{o}\in \mathcal{R}^{N\times 1}$. 
\subsubsection{Executing locally}
 {Let $\lambda_n^t$ denote the task arrival rate of the $n$-th industrial IoT device in time slot $t$, which is assumed to follow  a general random distribution. The raw data size of the generated tasks at the $n$-th device is denoted by $\xi_n^t=\lambda_n^t \nu_m, \forall n\in \mathcal{N}_m$, where $\nu_m$  denotes the raw data size of a task for service $m$.} After  the sampling rate is selected, the data size of the generated task is represented by $\zeta\left(\mathbf{x}_{n}^t\right)=\sum_{k=1}^{K} x_{n, k}^t \xi_n^t k/K$, where $\mathbf{x}_n^t=\{x_{n, k}^t\}_{k\in \mathcal{K}}$ is the sampling rate selection decision vector of the $n$-th device. When the inference task is processed locally by a compressed DNN, the service delay includes the queuing delay in the local computing queue and task processing delay, which is given by 
\begin{equation}
d^t_{n,l}=\frac{o_n^t\eta_{m,c} \left(B_n^t+\zeta\left(\mathbf{x}_{n}^t\right) \right)}{f_n}, \forall n\in \mathcal{N}_m,
\end{equation} 
where $f_n$ is the CPU frequency of the $n$-th industrial IoT device, and $\eta_{m,c}$ denotes the computation intensity of the compressed DNN for the $m$-th service. Here, $B_n^t$ is the backlogged computation tasks (in bits) in the local computing queue, which is updated via
\begin{equation}\label{equ:local_buffer}
B_n^{t+1}=\min\left\{\left[ B_n^t+ o_n^t\zeta\left(\mathbf{x}_n^t\right) -\frac{f_n\tau }{\eta_{m,c}}\right]^+, B_n^{max}\right\}, 
\end{equation}
where $\left[x \right]^+=\max\{x,0\}$, $B_n^{max}$ is the capacity of the local computing queue, and $\tau$ is the duration of a time slot. {Tasks will be dropped if the local computing queue is full. Let  
\begin{equation}
\Psi^t_{b,n}=\max \left\{B_n^t+ o_n^t\zeta\left(\mathbf{x}_n^t\right) -\frac{f_n\tau }{\eta_{m,c}}- B_n^{max}, 0 \right\}
\end{equation}
denote the amount of the dropped tasks in the local computing queue of device $n$. Here, $\Psi^t_{n,b}>0$ indicates that an event of local computing queue overflow occurs at the $n$-th  device, and the corresponding penalty will be incurred to avoid queue overflow.}

%For the local computed tasks, the local processing delay can be calculated by
%\begin{equation}
%d^t_{n, p}=\frac{b_n^t\zeta_n^t\eta_0}{f_{d,n}}
%\end{equation}
%For tractability, $Q_{d, n}^t\in [0, Q_{max}]$ is applied to capture the length (in bits) of the computing queue of device $n$ since the computing queue contains different compressed versions of computation tasks. Here, $Q_{max}$ the size of computing buffer (may be can be removed  with a queue stability condition). The queuing delay for these tasks can be given by 
%
%\begin{equation}
%d^t_{n, q}=\frac{Q_{d,n}^t\eta}{f_{d,n}}
%\end{equation} 
%The overall local delay is given by
%\begin{equation}
%d^t_n=	d^t_{n, p}+d^t_{n, q}
%\end{equation}
%The dynamics of the local task queue at IoT device $n$ can be modelled by
%\begin{equation}
%Q_{d,n}^{t+1}=\min \{Q_{d, n}^t+ b_n^t \zeta_n^t - \frac{f_{d, n}\tau}{\eta_0}, 0 \}
%\end{equation} 
%where $\tau$ denotes the duration of the time slots. Here, $ b_n^t \zeta_n^t $ denotes the newly arrived tasks and ${f_{d, n}\tau}/{\eta_0}$ denotes the maximum number tasks can be processed locally. 

%The data size of generated task at device $n$ at time slot $t$ is $\xi_k^t$, which is assumed to follow a Poisson distribution, which is widely adopted assumption in literature \cite{wang2019delay, xu2018joint}. Let $\zeta_n^t\delequal m_n^t\sum_{k=1}^{K} x_{n, k}^t\xi_k^t$ represents the traffic from device $n$. To complete task inference,

%Due to the transmission capacity of wireless channel, the offloaded tasks 
\subsubsection{Offloading to AP}
%\subsubsection{Offloading Delay}
When a task is offloaded to the AP, it will be processed by an uncompressed DNN. The service delay consists of task offloading delay, queuing delay in the edge computing queue, and task processing delay, which are analyzed respectively as follows. 
\begin{itemize}
	\item Task offloading delay: For the $n$-th industrial IoT device, the offloading delay is given by
	\begin{equation}
	d_{n, o}^t=\frac{\left(1-o_n^t\right)\zeta\left(\mathbf{x}_n^t\right)}{R_n^t},
	\end{equation}
	where transmission rate between the $n$-th industrial IoT device  and the AP, $R_n^t$, is given by
	\begin{equation}
	R_n^t=\frac{W}{N}\log_2\left(1+\frac{P_TG(H_n^t)}{N_f\sigma^2}\right).
	\end{equation}
	Here, $W$, $P_T$, $G(H_n^t)$, and $N_f$ represent the system bandwidth, transmit power, channel gain, and noise figure, respectively. $\sigma^2={N_oW}/{N}$ denotes the background noise where $N_o$ is  thermal noise spectrum density. Channel gain $G(H_n^t)$ varies in terms of channel state $H_n^t$. Based on extensive real-time measurements,  channel state $H_n^t$ can be modeled with a finite set of channel states $\mathcal{H}$~\cite{lei2016optimal}. The evolution of channel states is characterized by a discrete-time and ergodic Markov chain model, whose transition matrix is $\mathbf{P}\in \mathbb{R}^{|\mathcal{H}|\times |\mathcal{H}|}$. %Note that we assume free transmission backlog in this paper, and our work can be readily extended to the backlogged case by considering the task transmission queue. 
	
	\item Task processing delay: The  tasks from all industrial IoT devices subscribed to the $m$-th service are placed in the edge computing queue for the $m$-th service. The amount of aggregated tasks is given by $\sum_{n\in \mathcal{N}_m}\left(1-o_n^t\right)\zeta\left(\mathbf{x}_n^t\right)$.  The computing resource is dynamically allocated among multiple services at the AP according to service task arrivals, which can be realized via containerization techniques, such as Dockers and Kubernetes~\cite{bernstein2014containers}. Let $\mathbf{c}^t\in \mathbb{R}^{M\times 1}$ denote the computing resource allocation decision vector in time slot $t$. Each element $0\leq c_m^t\leq 1$ denotes the portion of computing resource allocated to the $m$-th service. %where the constraint $\sum_{m\in \mathcal{M}}c_m^t\leq 1$ is placed due to the computation capacity at the AP. 
	Hence, the processing delay  is given by 
	\begin{equation}
	d^t_{n, p}=\frac{ \eta_{m,u} \left(1-o_n^t\right)\zeta\left(\mathbf{x}_n^t\right)}{c_m^t f_b}, \forall n\in \mathcal{N}_m,
	\end{equation}
	 where $f_b$ is the CPU frequency of the computing server at the AP, and $\eta_{m,u}$ denotes the computation intensity of processing the $m$-th service task by the uncompressed DNN. Note that $\eta_{m,u}>\eta_{m,c}$, since the uncompressed DNN consumes more computing resource.
	
	\item Queuing delay: The queuing delay consists of two components: (i) the time taken to process  backlogged tasks in the edge computing queue, which is given by 
	\begin{equation}
	d^t_{n,q}=\frac{ Q_{m}^t\eta_{m,u}  }{c_m^t f_b}, \forall n\in \mathcal{N}_m.
	\end{equation}
 Here, $Q_{m}^t$ denotes the edge computing queue backlog for the $m$-th service in time slot $t$, %The task arrival can be represented by $a_m^t=\sum_{n\in \mathcal{N}_m} \left(1-o_n^t\right)\zeta\left(\mathbf{x}_n^t\right) $, and hence the edge computing queue backlog 
 which is updated according to 
	\begin{equation}\label{equ:computing_queue_update}
	\begin{split}
		Q_{m}^{t+1}&=\min\left\{ \left[ Q_{m}^t+ a_m^t - \frac{c_m^tf_b\tau}{\eta_{m,u}}\right]^+, Q_m^{max} \right\}.
	\end{split}
	\end{equation}
	Here, $a_m^t=\sum_{n\in \mathcal{N}_m} \left(1-o_n^t\right)\zeta\left(\mathbf{x}_n^t\right) $ and $Q_m^{max}$ denotes the capacity of the $m$-th edge computing queue. Similar to that in local computing queues, tasks will also be dropped if the edge computing queue is full, and the amount of dropped tasks for the $m$-th edge computing queue is given by
	\begin{equation}
	\Psi^t_{q,m}=\max \left\{ Q_{m}^t+ a_m^t - \frac{c_m^tf_b\tau}{\eta_{m,u}}-Q_m^{max}, 0\right\}.
	\end{equation}
	Here, $	\Psi^t_{q,m}>0$ indicates that an event of edge computing queue overflow occurs; and (ii) average waiting time among all newly arrived tasks until the task of industrial IoT device $n$ is processed, which is given by
	\begin{equation}
	d^t_{n,w}=\frac{\eta_{m,u}   \sum_{i\neq n, i \in \mathcal{N}_m }\left(1-o_{i}^t\right) \zeta\left(\mathbf{x}_i^t\right) }{ 2 c_m^t f_b}.
	\end{equation}
	Here, $\sum_{i\neq n, i \in \mathcal{N}_m }\left(1-o_{i}^t\right) \zeta\left(\mathbf{x}_n^t\right)$ denotes the amount of aggregated tasks except the task of industrial IoT device~$n$. 
\end{itemize}

%Once the tasks have been offloaded, the tasks can be processed by edge servers which is equipped with DNN models for high-accuracy inference service. Let  $\mathbf{Y}^t\in \mathbb{Z}^{N\times M}$ denote the DNN model selection decision for IoT devices, whose element $y_{n,m}^t=1$ represents that device $n$ selects $m$-th model for service at AP. In this way, the computation tasks from device $n$ will be places in the computing queue of $m$-th DNN model. 

Taking both local execution and offloading into account, the service delay in time slot $t$ is given by
\begin{equation}\label{equ:overall_delay}
\begin{split}
D\left(\mathbf{X}^t, \mathbf{o}^t, \mathbf{c}^t\right)
&=\sum_{n\in \mathcal{N}}\left(d^t_{n,l}+ d^t_{n,o}+d^t_{n, p}+d^t_{n,q}+d^t_{n,w}\right)\\
&+ w_p\left(\sum_{ n\in \mathcal{N}} \mathbbm{1}_{\{\Psi_{b,n}^t>0\}} + \sum_{ m\in \mathcal{M}} \mathbbm{1}_{\{\Psi_{q,m}^t>0\}}\right),
\end{split}
\end{equation}
where $\mathbbm{1}_{\{x\}}=1$ and $w_p>0$ are the indicator function and the positive unit penalty cost for queue overflow, respectively. The first term represents the experienced delay  to complete all tasks in time slot $t$. The second term represents the penalty for potential overflow events in local and edge computing queues.

%The overall delay at time slot $t$ is given by integrating local computing tasks and BS computing tasks, i.e.,
%\begin{equation}
%D^t=\sum_{n\in \mathcal{N}}d^t_n %+\sum_{n\in \mathcal{N}}d^t_{n,o}+\sum_{m\in \mathcal{M}}d^t_m
%\end{equation}

\subsection{Inference Accuracy Model}\label{sec:Inference Accuracy Model}
{The inference accuracy depends on the sampling rate of a task and the  type of DNN that executes a task. Firstly, we characterize the relationship between the inference accuracy and the sampling rate, which is specified by accuracy function $g(\theta_k), \forall \theta_k \in \mathcal{K}$. Specifically, we implement a DNN inference algorithm, i.e., AlexNet~\cite{krizhevsky2012imagenet}, and apply the AlexNet to diagnose facility fault type based on the collected bearing vibration signal from the dataset~\cite{dataset}, and then measure the accuracy function values with respect to sampling rates, as shown in Fig.~\ref{Fig:sampling_rate}.} %The function indicates that the inference accuracy increases with the sampling rate, while the accuracy performance gain decreases at a high sampling rate. %Hence, the sampling rate can be dynamically configured at industrial IoT devices for better system performance. 
{Secondly, the relationship between the inference accuracy and the type of DNN  is also characterized via experiments.} Here, $h_{m, c}$ and $h_{m, u}$ represent the inference accuracy of the compressed DNN and the uncompressed DNN for the $m$-th service, respectively. Note that, $h_{m, c}<h_{m, u}$, as an uncompressed DNN achieves higher fault diagnosis accuracy. %Let function $h(\phi_m), \forall \phi_m \in \Phi$ denote the achieved accuracy with respect to the  a DNN model $ \phi_m$. 

Since the DNN model selection (i.e., task offloading decision) and the sampling rate selection are independent,  inference accuracy is the product of the accuracy value with respect to the selected sampling rate and the accuracy value with respect to the selected DNN type, i.e., $g\left(\sum_{k\in \mathcal{K}}x_{n,k}^t\theta_k \right) \left( o_n^t h_{m,c} +\left(1-o_n^t\right)h_{m,u}\right)$. Hence, the average inference accuracy for the $m$-th service in time slot $t$ can be given by
\begin{equation}\label{equ: average_accuracy}
\begin{split}
A_m\left(\mathbf{X}^t, \mathbf{o}^t\right)%&= \sum_{ n\in \mathcal{N}_m }o_n^tg\left(\sum_{k\in \mathcal{K}}x_{n,k}^t\theta_k \right)h_{l,m}\\
%&+  \left(1-o_n^t\right)g\left(\sum_{k\in \mathcal{K}}x_{n,k}^t\theta_k \right)h_{b,m}\\
&=\sum_{ n\in \mathcal{N}_m }\frac{1}{|\mathcal{N}_m |}g\left(\sum_{k\in \mathcal{K}}x_{n,k}^t\theta_k \right) \cdot \\
&\left( o_n^t h_{m,c} +\left(1-o_n^t\right)h_{m,u}\right).
\end{split}
\end{equation}
%	which takes both executing locally and offloading to the AP cases into consideration. %From \eqref{equ: average_accuracy}, we can see that the inference accuracy performance depends on offloading policy and the sampling rate selection. %However, intuitively offloading more tasks to the edge and increasing sampling rate can improve average accuracy performance while increasing the burden of communication and computing resources at the AP, which results in long service delay. 
{Note that the model can be readily extended to cases when other inference methods are adopted, since the accuracy values with respect to sampling rates and DNN types are obtained via practical experiments.}

\subsection{Problem Formulation}
DNN inference services require not only minimizing service delay, but also guaranteeing their long-term accuracy requirements, which can be modeled via a CMDP. Its action, state, reward, and state transition matrix are defined as follows:
\begin{itemize}
	\item Action: The action includes the sampling rate selection, task offloading, and edge computing resource allocation decisions, i.e., $	\hat{a}^t=\{\mathbf{X}^t, \mathbf{o}^t, \mathbf{c}^t\}.$ Note that the components of the action should satisfy following constraints: (1) $ x_{n,k}^t \in \{0, 1\}$ constrains the sampling rate selection decision; (2) $ o_{n}^t \in \{0, 1\} $ requires the binary task offloading decision; and (3) $\sum_{m\in \mathcal{M}} c_m^t\leq 1$ and $0 \leq c_m^t\leq 1$ constrain a continuous computing resource allocation decision. 
%	Note that the constraint of each action component is satisfied via projecting it into a feasible action set.
	\item State:  The state includes  local computing queues backlog of industrial IoT devices $ B_n^t$,  edge computing queues backlog $Q_m^t$,  channel conditions of industrial IoT devices $H^t_n$, and the raw data size of the generated tasks at industrial IoT devices $\xi_n^t$, i.e.,
	\begin{equation}\label{equ:state_CMDP}
	\begin{split}
	\hat{s}^t=&\{ \{ B_n^t\}_{n\in \mathcal{N}},  \{Q_m^t\}_{m\in \mathcal{M}},  \{H^t_n\}_{n\in \mathcal{N}}, \{\xi_n^t\}_{n\in \mathcal{N}}\}.
	\end{split}
	\end{equation}
	The queue backlogs, i.e., $\{ B_n^t\}_{n\in \mathcal{N}}$ and $\{Q_m^t\}_{m\in \mathcal{M}}$, adopt a unit in bits, which  result in large state space, especially for a large number of industrial IoT devices.

	\item Reward: The reward is designed to minimize the service delay in \eqref{Problem: problem1} in time slot $t$, which is defined as $\hat{r}^t=-D\left(\mathbf{X}^t, \mathbf{o}^t, \mathbf{c}^t\right).$
	 	
	\item State transition probability:  State transition probability is given by
	\begin{equation}\label{equ:state_transition_CMDP}
	\begin{split}
	&\mathrm{Pr}\left(\hat{s}^{t+1}|\hat{s}^t, \hat{a}^t\right)=\prod_{n\in\mathcal{N}} \mathrm{Pr}\left(B_n^{t+1}|B_n^{t}, x_{n,k}^t, o_n^t\right)\cdot \\		&\prod_{m\in\mathcal{M}} \mathrm{Pr}\left(Q_m^{t+1}|Q_m^{t}, \mathbf{X}^t, \mathbf{o}^t\right)\cdot\\
	&\prod_{n\in\mathcal{N}} \mathrm{Pr}\left(H_n^{t+1}|H_n^{t}\right) \cdot \prod_{n\in\mathcal{N}} \mathrm{Pr}\left(\xi_n^{t+1}|\xi_n^{t}\right). 
	\end{split}
	\end{equation}
	The equality holds due to the independence of different state components. The first two components are governed by the evolution of local computing queues and edge computing queues in \eqref{equ:local_buffer} and \eqref{equ:computing_queue_update}, respectively. The third component is evolved according to the discrete-time  Markov chain of channel conditions, and the last component is governed by the memoryless task arrival pattern. Note that each of those state components only depends on its previous state components, which means the state transition is Markovian.
\end{itemize}

Our goal is to find a stationary policy $\pi\in \Pi$ that dynamically configures sampling rates selection $\mathbf{X}^t$, task offloading~$\mathbf{o}^t$, and edge computing resource allocation $\mathbf{c}^t$ according to state $\hat{s}^t$, to minimize the service delay while guaranteeing long-term inference accuracy requirements $\{A_m^{th}\}_{m\in \mathcal{M}}$, which is formulated as the following problem:
\begin{subequations}\label{Problem 1}
	\begin{align}
		{\mathbf{P}_0:}~	& \underset{\pi \in \mathrm{\Pi}}{\text{min}} ~~ \lim\limits_{T\to \infty} \frac{1}{T}\sum_{t=1}^{T}\mathbb{E}_{\pi}\left[D\left(\mathbf{X}^t, \mathbf{o}^t, \mathbf{c}^t\right) \right]  \\%e^{-(1-P_{F}^T) \frac{N}{N_f}} (1-P_{F}^T) \frac{N}{N_f} \\
	& \text{s.t.}~  \lim\limits_{T\to \infty}\frac{1}{T} \sum_{t=1}^{T}  A_m\left(\mathbf{X}^t, \mathbf{o}^t\right) \geq A_m^{th}, \forall m\in \mathcal{M}.   %, \forall n\in 
	\end{align}
\end{subequations} 
Here, $\mathbf{P}_0$ is a CMDP. Directly solving the above CMDP via dynamic programming solutions~\cite{altman1999constrained} is challenging due to the following reasons. Firstly, state transition probability is unknown due to the lack of statistic information on the channel condition variation and task arrival patterns of all industrial IoT devices. Secondly, even the state transition probability is known, large action space and state space that grow with respect to the number of industrial IoT devices incur an extremely high computational complexity, which makes dynamic programming solutions intractable. Hence, we propose a deep RL-based algorithm to solve the CMDP, which can be applied in large-scale networks without requiring statistic information of network dynamics. %\textcolor{blue}{Traditional RL algorithms, such as DDPG, are designed to solve MDP problems without considering policy constraints. The CMDP is a class of problems which target at maximizing the long-term reward while satisfying the constraints on the long-term cost~\cite{altman1999constrained}. Due to the underlying difference between CMDP and MDP, CDMP cannot be solved via traditional RL algorithms. To solve the problem, we propose a learning-based solution for CMDP.` }

%We first decouple the problem into two sub-problems, i.e., the computing resource allocation sub-problem and the inference configuration problem. In the following, we solve these two sub-problem respectively.

\section{Deep RL-based Sampling Rate Adaption and Resource Allocation Algorithm}\label{sec: Proposed solution}
	As mentioned before, a CDMP problem cannot be directly solved via traditional RL algorithms. We first leverage the Lyapunov optimization technique to deal with the long-term constraints and transform the problem into an MDP. Then, we develop a deep RL-based algorithm to solve the MDP. To further reduce the training complexity, an optimization subroutine is embedded to directly obtain the optimal edge computation resource allocation.
%The second sub-problem is the device configuration sub-problem, in which the sampling rate and DNN model are jointly selected to minimize the long-term service latency while guaranteeing the average accuracy performance, which can formulated as
%\begin{subequations}\label{Problem 3}
%	\begin{align}
%	&	{\mathcal{P}_2:}	& \underset{\{\mathbf{X}^t, \mathbf{Y}^t\}_{t\in \mathcal{T}}}{\text{min}}& \lim\limits_{T\to \infty} \frac{1}{T}\sum_{t=1}^{T} \sum_{n\in \mathcal{N}} D_n^t \nonumber \\%e^{-(1-P_{F}^T) \frac{N}{N_f}} (1-P_{F}^T) \frac{N}{N_f} \\
%	& \text{s.t.}&  \eqref{Problem_0_C1},  \eqref{Problem_0_C4}   %\\%, \forall n\in 
%	% 	&&C2:& \sum_{m\in \mathcal{M}} c_m^t\leq 1\\
%	% 	& &C3:& b_n,	c_m^t\in [0, 1]\\
%	% 	& &C4:& x_{n,k},	y_{n,m}\in \{0, 1\}
%	\end{align}
%\end{subequations} 
%Due to the random computation task arrival pattern and time-varying channel conditions, it is difficult to solve the above problem. Hence, we reformulate the problem within a constrained MDP framework because of the long-term accuracy constraint in \eqref{Problem_0_C1}. 

\subsection{Lyapunov-Based Problem Transformation}
{The major challenge in solving problem $\mathbf{P}_0$ is to handle the long-term constraints. We leverage the Lyapunov technique~\cite{neely2010stochastic, luo2019adaptive} to address this challenge. The \emph{core idea} is to construct accuracy deficit queues to characterize the satisfaction status of the long-term accuracy constraints, thereby guiding the learning agent to meet the long-term accuracy constraints. The problem transformation procedure is presented as follows.}
% transforming the constrained MDP problem into an MDP problem. 

Firstly, we construct inference accuracy \emph{deficit queues}  for all services, whose dynamics evolves as follows:
\begin{equation}\label{equ:accuracy_queue_update}
Z^{t+1}_m=\left[  A_m^{th}-A_m\left(\mathbf{X}^t, \mathbf{o}^t\right)+Z^{t}_m  \right]^+, \forall m \in \mathcal{M}.
\end{equation}
Here, $Z^{t}_m$ indicates the deviation of the achieved instantaneous accuracy from the long-term accuracy requirement, whose initial state is set to $Z^{0}_m=0$. Then, a Lyapunov function  is introduced to characterize the satisfaction status of the long-term accuracy constraint, which is defined as $L\left(Z^{t}_m\right)=\left(Z^{t}_m\right)^2/2$~\cite{neely2010stochastic, luo2019adaptive, xu2018joint}.  A smaller value of $L\left(Z^{t}_m\right)$ indicates better satisfaction of the long-term  accuracy constraint.

Secondly, the Lyapunov function should be consistently pushed to a low value in order to guarantee the long-term accuracy constraints. Hence, we introduce a \emph{one-shot Lyapunov drift} to capture the variation of the Lyapunov function across two subsequent time slots~\cite{neely2010stochastic}. Given $Z^{t}_m$, the one-shot Lyapunov drift is defined as $\Delta\left(Z^{t}_m\right) =L\left(Z^{t+1}_m\right)-L\left(Z^{t}_m\right)$, which is upper bounded by
\begin{equation}\label{equ:drift}
\begin{split}
&\Delta\left(Z^{t}_m\right) =\frac{1}{2}\left(\left(Z^{t+1}_m\right)^2-\left(Z^{t}_m\right)^2\right)\\
& \leq\frac{1}{2}\left(\left(Z^{t}_m + A_m^{th}-A_m\left(\mathbf{X}^t, \mathbf{o}^t\right)\right)^2-\left(Z^{t}_m\right)^2\right)\\
&= \frac{1}{2}\left(A_m^{th}-A_m\left(\mathbf{X}^t, \mathbf{o}^t\right)\right)^2+Z^{t}_m\left(A_m^{th}- A_m\left(\mathbf{X}^t, \mathbf{o}^t\right)\right)\\
&\leq C_m+Z^{t}_m\left(A_m^{th}-A_m\left(\mathbf{X}^t, \mathbf{o}^t\right)\right),
\end{split}
\end{equation}
where $C_m=\left(A_m^{th}-{A_m^{min} }\right)^2/2$ is a constant, and $A_m^{min}$ is the lowest inference accuracy that can be achieved for service~$m$. The first inequality is due to the substitution of \eqref{equ:accuracy_queue_update}, and the second inequality is because $A_m\left(\mathbf{X}^t, \mathbf{o}^t\right)\geq A_m^{min}$. 

Thirdly, based on the Lyapunov optimization theory, the original CMDP of minimizing the service delay while guaranteeing the long-term accuracy requirements boils down to minimizing a \emph{drift-plus-cost}, i.e.,
\begin{equation}\label{equ:drift_plus_cost}
\begin{split}
&\sum_{m\in \mathcal{M}}\Delta\left(Z^{t}_m\right) +V \cdot D\left(\mathbf{X}^t, \mathbf{o}^t, \mathbf{c}^t\right) \\
&\leq \sum_{m\in \mathcal{M}}C_m+\sum_{m\in \mathcal{M}}Z^{t}_m\left(A_m^{th}-A_m\left(\mathbf{X}^t, \mathbf{o}^t\right)\right)\\
&+V\cdot D\left(\mathbf{X}^t, \mathbf{o}^t, \mathbf{c}^t\right),
\end{split}
\end{equation}
where the inequality is due to the upper bound in \eqref{equ:drift}. Here $V$ is a positive parameter to adjust the tradeoff between the service delay minimization and the satisfaction status of the long-term accuracy constraints. The underlying rationale is that, if the long-term accuracy constraint is violated, i.e., $Z_m^t>0$, stratifying the long-term constraints by improving the instantaneous inference accuracy becomes more urgent than reducing the service delay. 

{In this way, the CMDP is transformed into an MDP with the objective of minimizing the drift-plus-cost in each time slot.}
%With this transformation, the problem is converted to regular MDP problem with an objective of minimizing the upper bound of drift-plus-cost in \eqref{equ:drift_plus_cost}.

\subsection{Equivalent MDP}
In the equivalent MDP, the action, state, reward, and state transition matrix are modified as follows due to the incorporation of accuracy deficit queues.
\begin{itemize}
	\item  Action: The action is the same as that in the CMDP, i.e., ${a}^t=\hat{a}^t=\{\mathbf{X}^t, \mathbf{o}^t, \mathbf{c}^t\}$.
	\item  State:  Compared with the state of the CMDP, the accuracy deficit queue backlog of services $\{Z_m^t\}_{m\in \mathcal{M}}$ should be incorporated, i.e.,
	\begin{equation}
	\begin{split}
	s^t=\{ \hat{s}^t, \{Z_m^t\}_{m\in \mathcal{M}}\}.
	\end{split}
	\end{equation}
	\item Reward: The reward is modified to minimize the drift-plus-cost in \eqref{equ:drift_plus_cost} in time slot $t$, i.e.,
	\begin{equation}\label{learning_reward}
	\begin{split}
		r^t&=-V\cdot D\left(\mathbf{X}^t, \mathbf{o}^t, \mathbf{c}^t\right) \\
	&- \sum_{m\in \mathcal{M}}Z^{t}_m\left(A_m^{th}-A_m\left(\mathbf{X}^t, \mathbf{o}^t\right)\right).
	\end{split}
	\end{equation}
	Note that the constant term $\sum_{m\in \mathcal{M}}C_m$  in \eqref{equ:drift_plus_cost} is ignored in the reward for brevity. 
	\item State transition probability: Since accuracy deficit queue backlogs are incorporated in the state, the state transition probability evolves according to 
	\begin{equation}
	\begin{split}
	\mathrm{Pr}\left(s^{t+1}|s^t, a^t\right)
	&=\mathrm{Pr}\left(\hat{s}^{t+1}|\hat{s}^t, \hat{a}^t\right)\cdot\\
	& \prod_{m\in\mathcal{M}} \mathrm{Pr}\left(Z_m^{t+1}|Z_m^{t}, \mathbf{X}^t, \mathbf{o}^t\right). 
	\end{split}
	\end{equation}
	where the second term is the evolution of the accuracy deficit queue backlog according to \eqref{equ:accuracy_queue_update}. Note that the overall state transition is still Markovian.
\end{itemize}
Then, problem $\mathbf{P}_0$ is transformed into the following MDP problem: 
\begin{equation}\label{Problem: problem1}
\begin{split}
	{\mathbf{P}_1:~}  \underset{\pi \in \mathrm{\Pi}}{\text{min}}  ~~\lim\limits_{T\to \infty}\frac{1}{T}\sum_{t=1}^{T} &\mathbb{E}_{\pi}\left[\sum_{m\in \mathcal{M}}Z^{t}_m\left(A_m^{th}-A_m\left(\mathbf{X}^t, \mathbf{o}^t\right)\right) \right.\\
	&\left.+V\cdot D\left(\mathbf{X}^t, \mathbf{o}^t, \mathbf{c}^t\right) \right].
\end{split}
\end{equation}
Similar to CMDP, solving an MDP via dynamic programming solutions also suffers from the curse of dimensionality due to large state space. Hence, we propose a deep RL-based algorithm to solve the MDP, which is detailed in Section~\ref{sec:proposed_algorithm}.

%Both action and state can take continuous values (e.g., edge computing resource allocation), and

%The reward can be proved to be Pareto optimal following solution in Shiqiang Wang TMC paper

\subsection{Optimization Subroutine for Edge Computing Resource Allocation }
%Note that the accuracy performance is independent from computing resource allocation such that constraint $C1$ is excluded.  
Although $\mathbf{P}_1$ can be directly solved by RL algorithms, an inherent property on edge computing resource allocation can be leveraged, in order to reduce the training complexity of RL algorithms. Through analysis on \eqref{Problem: problem1}, the edge computing resource allocation is independent of the inference accuracy performance, and hence it only impacts the one-shot service delay performance. In time slot $t$, once task offloading and sampling rate selection decisions are made, the optimal computing resource allocation decision can be obtained via solving the following optimization problem:
\begin{subequations}	\label{Problem 2}
\begin{align}
		{\mathbf{P}_2:\;}	 \underset{\mathbf{c}^t}{\text{min} }~~  &  D\left(\mathbf{X}^t, \mathbf{o}^t, \mathbf{c}^t\right) \nonumber \\%e^{-(1-P_{F}^T) \frac{N}{N_f}} (1-P_{F}^T) \frac{N}{N_f} \\
%	& \text{s.t.}&  C1:&\lim\limits_{T\to \infty}\frac{1}{T} \sum_{t=1}^{T} A^t \geq A_{th}\\%, \forall n\in 
	 \text{s.t.}~&\sum_{m\in \mathcal{M}} c_m^t\leq 1\\
	& 0 \leq c_m^t\leq 1. \label{equ:P2_c2}
%	& &C4:& x_{n,k},	y_{n,m}\in \{0, 1\}
\end{align}
\end{subequations} 
A further analysis of \eqref{equ:overall_delay} indicates that only the task processing delay and queuing delay at the AP are impacted by  the edge computing resource allocation, i.e., $\sum_{n\in \mathcal{N}}\left(d^t_{n, p}+d^t_{n,q}+d^t_{n,w}\right)$. In addition, the aggregated delay from the perspective of all devices is equivalent to the aggregated delay from the perspective of all services. Hence, the objective function in $\mathbf{P}_2$ can be rewritten as $ \sum_{m\in \mathcal{M}}d_m^t$, where
\begin{equation}
\begin{split}
d_m^t&=\sum_{ n\in \mathcal{N}_m }\left( \frac{ \eta_{m,u} \left(1-o_n^t\right)\zeta\left(\mathbf{x}_n^t\right) }{c_m^t f_b}+\frac{ Q_{m}^t\eta_{m,u}  }{c_m^t f_b} \right.\\
&+\left. \frac{\eta_{m,u}   \sum_{i\neq n, i \in \mathcal{N}_m }\left(1-o_{i}^t\right) \zeta\left(\mathbf{x}_{i}^t\right) }{ 2 c_m^t f_b}\right)
\end{split}
\end{equation}
denotes the experienced delay of the $m$-th service. By analyzing the convexity property of the problem, we have the following theorem to obtain the optimal edge computation resource allocation in each time slot.

\begin{theorem}\label{theorem: optimal_computing_resource_allocation}
	The optimal edge computing resource allocation for problem $\mathbf{P}_2$ is given by
	\begin{equation}\label{equ:optimization_subroutine}
	c_m^{t,\star}=\frac{\sqrt{\Lambda^t_m}}{\sum_{m\in \mathcal{M}}\sqrt{\Lambda^t_m}}, \forall m \in \mathcal{M},
	\end{equation}
	where 
	\begin{equation}\label{equ:Lambda_definition}
	\begin{split}
		\Lambda_m^t=&\sum_{ n\in \mathcal{N}_m }\left({ \eta_{m,u} \left(1-o_n^t\right)\zeta\left(\mathbf{x}_n^t\right) }+{ Q_{m}^t\eta_{m,u}  }\right.\\ 
		&\left.+\frac{\eta_{m,u} }{2}  \sum_{i\neq n, i \in \mathcal{N}_m }\left(1-o_{i}^t\right) \zeta\left(\mathbf{x}_{i}^t\right) \right).
	\end{split}
	\end{equation}
\end{theorem}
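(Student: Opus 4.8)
The plan is to solve $\mathbf{P}_2$ as a small convex program and obtain its optimizer in closed form through the Karush--Kuhn--Tucker (KKT) conditions. First I would invoke the reduction already carried out before the theorem statement, which rewrites the objective of $\mathbf{P}_2$ as $\sum_{m\in\mathcal{M}} d_m^t=\sum_{m\in\mathcal{M}}\Lambda_m^t/(f_b c_m^t)$, where $\Lambda_m^t$ aggregates all the workload terms (offloaded task sizes, the edge queue backlog $Q_m^t$, and the waiting-time contribution) that are independent of $\mathbf{c}^t$; this is precisely \eqref{equ:Lambda_definition}. Since $f_b>0$ and each $\Lambda_m^t\ge 0$, every summand $\Lambda_m^t/(f_b c_m^t)$ is convex and non-increasing in $c_m^t$ on $(0,1]$, and the feasible region cut out by the linear constraints is convex. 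Hence $\mathbf{P}_2$ is a convex program, and any KKT point is automatically a global minimizer.

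Second, I would establish which constraints are active at the optimum. Because the objective is strictly decreasing in each $c_m^t$, leaving any budget unused would strictly lower the delay, so the budget constraint must be tight, i.e. $\sum_{m\in\mathcal{M}} c_m^{t,\star}=1$. Moreover, since $M\ge 2$ and the allocations sum to one, each $c_m^{t,\star}<1$, so the per-service upper bounds $c_m^t\le 1$ are inactive. This lets me discard the inequality constraints and treat $\mathbf{P}_2$ as the minimization of $\sum_{m\in\mathcal{M}}\Lambda_m^t/(f_b c_m^t)$ subject only to the single equality $\sum_{m\in\mathcal{M}} c_m^t=1$.

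Third, I would form the Lagrangian $\mathcal{L}(\mathbf{c}^t,\mu)=\sum_{m\in\mathcal{M}}\Lambda_m^t/(f_b c_m^t)+\mu\big(\sum_{m\in\mathcal{M}} c_m^t-1\big)$ and impose the stationarity condition $\partial\mathcal{L}/\partial c_m^t=-\Lambda_m^t/(f_b (c_m^t)^2)+\mu=0$. Solving yields $c_m^t=\sqrt{\Lambda_m^t/(\mu f_b)}$, that is, $c_m^t\propto\sqrt{\Lambda_m^t}$, which is the characteristic water-filling-type shape. Substituting this proportionality into the equality $\sum_{m\in\mathcal{M}} c_m^t=1$ eliminates the multiplier $\mu$ and normalizes the allocation to give \eqref{equ:optimization_subroutine}, namely $c_m^{t,\star}=\sqrt{\Lambda_m^t}/\sum_{m'\in\mathcal{M}}\sqrt{\Lambda_{m'}^t}$; convexity then certifies global optimality.

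The part needing the most care will be the corner cases rather than the algebra. The monotonicity argument that forces the budget to bind must be stated cleanly, and a service with $\Lambda_m^t=0$ requires a brief note: the formula assigns $c_m^{t,\star}=0$ while its delay contribution $d_m^t$ already vanishes, so the limit is benign, but one should restrict the stationarity step to the relative interior $\{c_m^t>0\}$ to keep the derivative well defined. Apart from this degeneracy, the derivation is a routine convex-optimization argument.
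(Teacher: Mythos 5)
Your proposal is correct and follows essentially the same route as the paper's proof: rewrite the objective as $\sum_{m\in\mathcal{M}}\Lambda_m^t/(f_b c_m^t)$, verify convexity, form the Lagrangian with the budget constraint, solve the KKT stationarity condition to get $c_m^t\propto\sqrt{\Lambda_m^t}$, and normalize. Your additional care --- explicitly arguing that the budget constraint must bind via monotonicity and noting the degenerate $\Lambda_m^t=0$ case --- is a minor refinement of steps the paper handles implicitly through the complementary slackness condition.
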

\begin{proof}
	Proof is provided in Appendix \ref{appendix:theorem: optimal_computing_resource_allocation}.
\end{proof}

This optimization subroutine for the edge computing resource allocation is embedded in the following proposed deep RL-based algorithm. In this way, the training complexity can be reduced, because it is no longer necessary to train the neural networks to obtain optimal edge computing resource allocation policy.

%\section{Device Configuration Sub-Problem}

\subsection{Deep RL-based Algorithm}\label{sec:proposed_algorithm}
To solve problem $\mathbf{P}_1$, we propose a deep RL-based algorithm, which is extended from the celebrated DDPG algorithm~\cite{lillicrap2015continuous}. The main difference between DDPG and the proposed algorithm is that the above optimization subroutine for computing resource allocation is embedded to reduce the training complexity. The proposed algorithm  can be deployed at the AP, which  collects the entire network state information and enforces the policy to all connected industrial IoT devices.

\begin{algorithm}[t]	\label{algorithm:SARA}
	\SetAlgoLined
	
	%	\KwResult{Write here the result }
	\textbf{Initialization}: Initialize all neural networks and the experience replay memory;\\
	\For{each episode}{
		Reset the environment and obtain initial state $s_0$\;
		\For{time slot $t\in \mathcal{T}$}{
			%			{$\rhd$ Experience generation}\\
			Determine the sampling rate selection and task offloading actions $\{\mathbf{X}^t, \mathbf{o}^t\}$ by the actor network according to current state $s^t$\;
			Determine edge computing resource allocation action $\mathbf{c}^t$ by  \eqref{equ:optimization_subroutine}\;
			Send joint action $a^t=\{\mathbf{X}^t, \mathbf{o}^t, \mathbf{c}^t\}$ to all industrial IoT devices by the AP\;
			Execute the joint action at industrial IoT devices\;
			Observe reward $r^t$ and new state $s^{t+1}$\;
			Store transition $\{s^t, a^t, r^t, s^{t+1}\}$ in the epxerience replay memory\;
			%			{$\rhd$ Network update}\\
			Sample a random minibatch transitions from the epxerience replay memory\;
			%			Set $y_i=r_i+\gamma Q'(s_{i+1}, \mu'(s_{i+1}|\theta^{\mu'})|\theta^{Q'})$\;
			Train the critic and actor network by \eqref{equ:critic_update} and \eqref{equ:actor_update}, respectively\; %$L=\frac{1}{N_m}\sum_{i=1}^{N_m}\left(y_i-Q(s_i,a_i|\theta^{Q})\right)^2$\;
%			Update  network by \;
			Update target networks by \eqref{equ:target_update}\;
		}
	}
	\caption{Deep RL-based algorithm for sampling rate adaption and resource allocation}
\end{algorithm}

In the algorithm, the learning agent has two parts: (a) an actor network, which is to determine the action based on the current  state; and (b) a critic network, which is to evaluate the determined action based on the reward feedback from the environment.  Let $\mu(s|\phi^\mu)$  and  $Q(s,a|\phi^Q)$ denote the actor network and the critic network, respectively, whose neural network weights are $\phi^\mu$ and $\phi^Q$. As shown in Algorithm \ref{algorithm:SARA}, the deep RL-based algorithm operates in a time-slotted manner, which consists of the following three steps.

The first step is to obtain experience by interacting with the environment. Based on current network state $s^t$, the actor network generates the sampling rate selection and task offloading actions with an additive policy exploration noise that follows Gaussian distribution $\mathcal{N}\left(0, \sigma^2\right)$. The optimization subroutine generates the edge computation resource allocation action. Then, the joint action is executed at all industrial IoT devices. The corresponding reward $r^t$ and the next state $s^{t+1}$ are observed from the environment. The state transition $\{s^t, a^t, r^t, s^{t+1} \}$ is stored in the experience replay memory for actor and critic network training.
	 
The second step is to train the actor and critic network based on the stored experience. To avoid the divergence issue caused by DNN, a minibatch of transitions are randomly sampled from the experience replay memory to break experience correlation. The critic network is trained by minimizing the loss function
\begin{equation}\label{equ:critic_update}
Loss\left(\phi^Q\right)=\frac{1}{N_b}\sum_{i=1}^{N_b}\left(y_i-Q(s_i,a_i|\phi^{Q})\right)^2,
\end{equation}
where $y_i=r_i+\gamma Q'(s_{i+1}, \mu'(s_{i+1}|\phi^{\mu'})|\phi^{Q'})$, and $N_b$ is the minibatch size. Here, $\mu'(s|\phi^{\mu'})$  and  $Q'(s,a|\phi^{Q'})$  represent actor and critic target networks with weights $\phi^{\mu'}$ and $\phi^{Q'}$. The actor network is trained via the policy gradient
\begin{equation}\label{equ:actor_update}
\nabla_{\phi^\mu}\approx\frac{1}{N_b}\sum_{i=1}^{N_b}\nabla_a Q(s_i,a|\phi^{Q})|_{s=s_i, a=\mu(s_i)}\nabla_{\theta^\mu}\mu(s_i|\phi^\mu)|_{s_i}.
\end{equation}

The third step is to update target networks. In order to ensure network training stability, the actor and critic target networks are softly updated by
	 \begin{equation}\label{equ:target_update}
	 \begin{split}
	 \phi^{Q'}=\delta \phi^{Q}+(1-\delta) \phi^{Q'},
	 \phi^{\mu'}=\delta \phi^{\mu}+(1-\delta) \phi^{\mu'},
	 \end{split}
	 \end{equation}
	 where $0<\delta\ll 1$ denotes the target network update ratio.

\begin{table}[t]
	\footnotesize
%		\vspace{-0.7cm}
	\centering
	\caption{Simulation parameters~\cite{petrov2017vehicle, wu2019fast}.}
	\label{Table: Simulation parameters}
%			\vspace{-0.2cm}
	\begin{tabular}{c|c}
		\hline
		\hline
	  \textbf{Parameter} & \textbf{Value}  \\
		\hline
	 Thermal noise spectrum density $(N_o)$& $ -174$ dBm/Hz~\cite{wu2019fast}\\
	 Communication bandwidth	$(W)$&  $[5, 25]$  MHz\\
	 Transmit power $(P_T)$ & 20\;dBm ~\cite{petrov2017vehicle}\\
	Average	task arrival rate $(\lambda)$ & $[0.6, 1]$ request/sec\\
		Noise figure $(N_f)$ & 5 dB\\
	 Intensity of compressed DNN $(\eta_{1,c}, \eta_{2,c})$ & $(80, 160)$  cycles/bit\\
	  Intensity of uncompressed DNN $(\eta_{1,u}, \eta_{2, u})$ &  (200, 400) cycles/bit\\
	Device and edge server CPU frequency 	$(f_n, f_b)$ & $(0.1, 2)$ GHz\\
	Number of Type I/II devices $(N_1, N_2)$ & $5, 5$\\
	 Time slot duration	$(\tau)$ & 1 second\\
	Balance parameter 	(V) & 0.05\\
	Unit penalty for queue overflow ($w_p$)& 1\\
   Accuracy of compressed DNN	$(h_{1,c}, h_{2,c}) $ & $0.8, 0.8$ \\
	Accuracy of uncompressed DNN	 $(h_{1,u}, h_{2,u})$ & $1, 1$ \\
	{Local/edge  queue capacity	$(B_{max}, Q_{max})$}&{ {(3.84, 19.2)\;megabits}} \\
%	  \textcolor{blue}{Max length of  computing queue $(Q_{max})$}&  \textcolor{blue}{19.2\:megabits}\\
		\hline
	\end{tabular}
			\vspace{-0.2cm}
\end{table}	

%------------------------------------------------------------------------------------------------------------------------------------------------
\section{Simulation Results}\label{sec: simulation results}
\subsection{Simulation Setup}
\begin{table}[t]
	\footnotesize
	\centering
	\caption{Parameters of the proposed RL-based algorithm.}
	\label{Neural network parameters}
%		\vspace{-0.2cm}
	\begin{tabular}{ c c|c c }
		\hline
			\hline
		\textbf{Parameter} & \textbf{Value} & \textbf{Parameter} & \textbf{Value}\\ \hline
		Actor learning rate & $10^{-4}$& Critic learning rate & $10^{-3}$ \\
		Actor  hidden units& $(64, 32)$& Critic hidden units& $(64, 32)$  \\ 
		Hidden activation & ReLU& Actor output activation& Tanh\\
		Optimizer& Adam &Policy noise  $(\sigma)$ &$0.2$  \\
		Target update $(\delta)$ & $0.005$& Discount factor& $0.85$ \\ 
		Minibatch size& $64$ & Replay memory size& 100,000 \\
		Training episodes& $1,000$ & Time slots per episode& 200 \\ \hline
	\end{tabular}
		\vspace{-0.4cm}
\end{table}

%\begin{eqnarray*}
%	\begin{split}
%		&\max &{X_{i,j},\forall i \in \{1,2,..., M\}}~~{R}_{sec}\\
%		&\text{s.t.}~& \sum_{j\in\{R,J,N_u\}}X_{i,j}=1, \forall i \in \{1,2,..., M\}\\
%		&X_{i,j}\in\{0,1\}, \forall i \in \{1,2,..., M\}~\text{and}~\forall j \in\{R,J,N_u\}
%		\label{selection}
%	\end{split}
%\end{eqnarray*}

%\begin{eqnarray}
%\begin{split}
%\max_{R_m}~~&U_s=\lambda_1 R_{sec}-R_m\\
%s.t.~& 0\leq R_m \leq R_{max}.
%\end{split}
%\end{eqnarray}

%We simulate the proposed algorithm in the smart factory scenario. 
{We consider a smart factory scenario in our simulation, in which industrial IoT devices, e.g., vibration sensors,  are randomly scattered. The industrial IoT devices installed on industrial facilities (e.g., robot arms) sense their operating conditions. The sensing data is processed locally or offloaded to an AP in the smart factory for processing.} The transmit power of an industrial IoT device is set to 20\;dBm~\cite{petrov2017vehicle}. The channel condition is modeled with three states, i.e., ``Good (G)'', ``Normal (N)'', and ``Bad (B)'', and the corresponding transition matrix is given by \cite{lei2016optimal}
\begin{equation}
 \mathbf{P}=
  \begin{bmatrix}
P_{GG} & P_{GN} & 0\\
 P_{NG} & P_{NN} & P_{NB}\\
 0 & P_{NB} & P_{BB}
 \end{bmatrix}
 =\begin{bmatrix}
0.3 & 0.7 & 0\\
0.25 & 0.5 & 0.25\\
0 & 0.7 & 0.3
\end{bmatrix}.	
\end{equation}   

\begin{figure}[t]
	\centering
	\renewcommand{\figurename}{Fig.}	
	\begin{subfigure}[Average delay performance]{
			\label{Fig:delay_convergency}
			\includegraphics[width=0.35\textwidth]{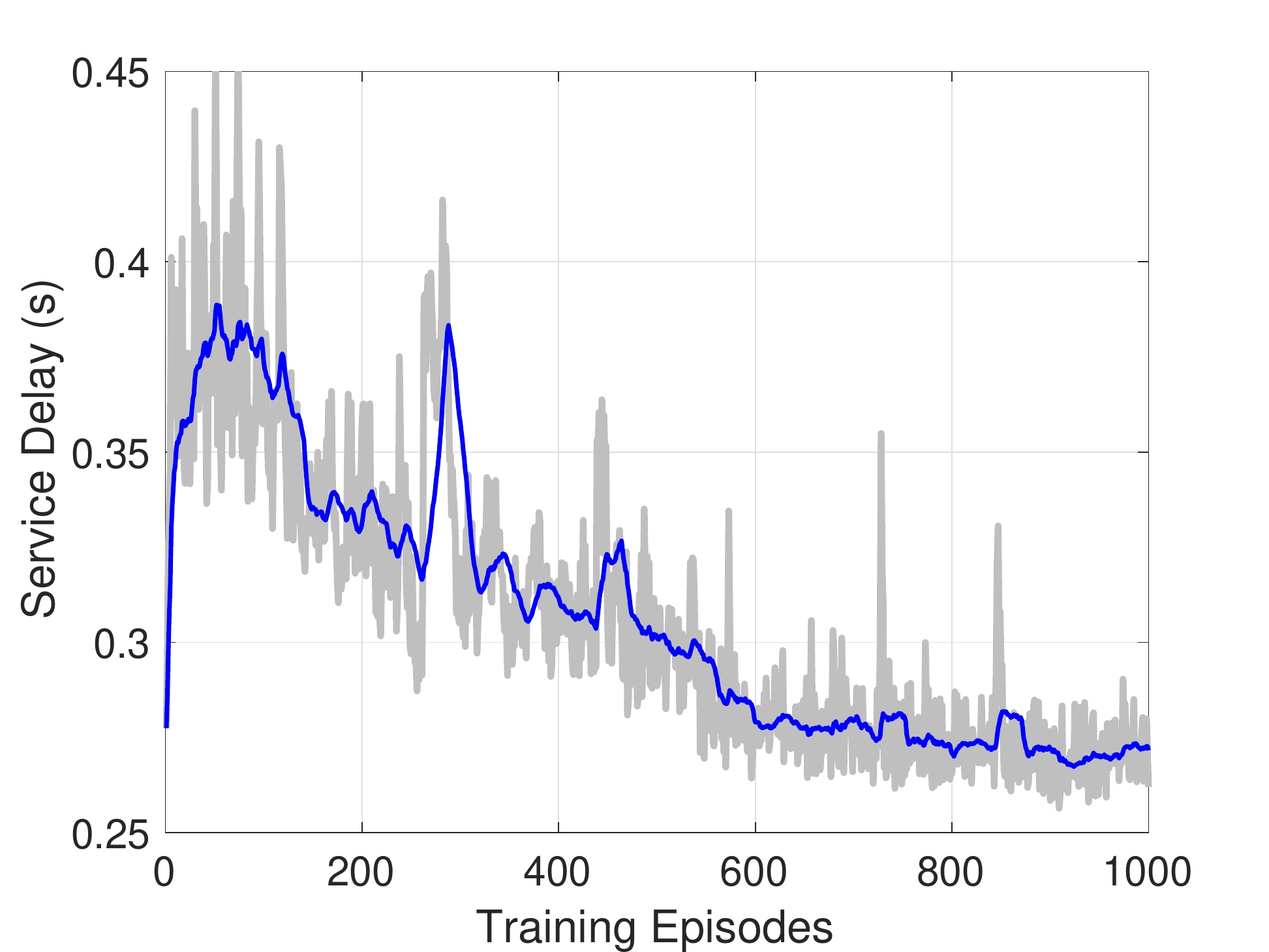}}
	\end{subfigure}
	~
	\begin{subfigure}[Accuracy Performance of two services]{
			\label{Fig:accuracyA_traffic_new.png}
			\includegraphics[width=0.35\textwidth]{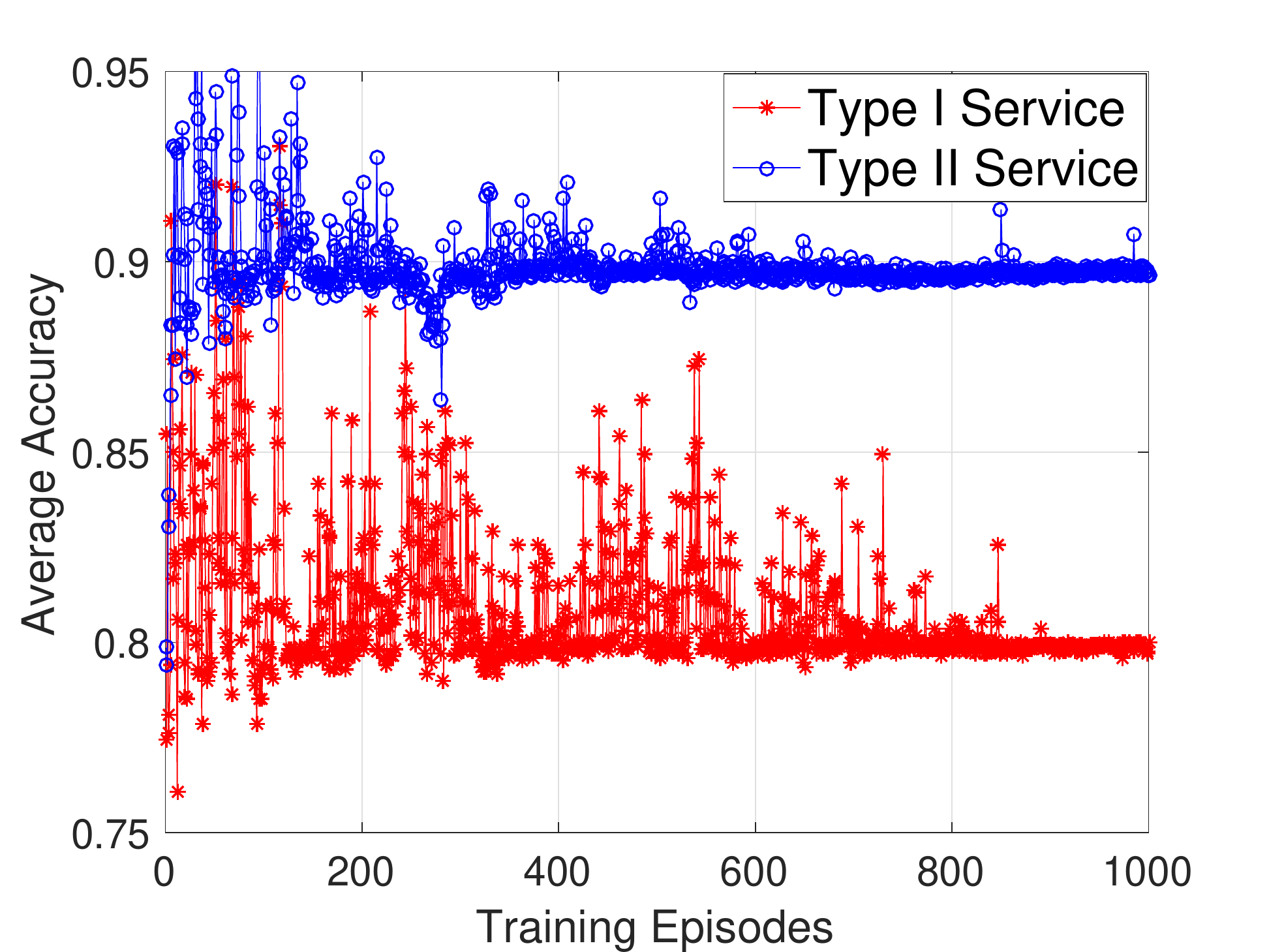}}
	\end{subfigure}
	\caption{Performance of the proposed algorithm in the training stage.}
	\label{Fig:transmission_performance}
	\vspace{-0.2cm}
\end{figure}

{Two types of DNN inference services are considered.  \emph{Type I service}: a facility fault diagnosis service to identify the fault type based on the collected  bearing vibration signal from the dataset~\cite{dataset}.  Since the duration of a time slot in the simulation is set to be one second, the task data size is the data volume of a one-second signal, which is a product of the raw sampling rate and the quantization bits of the signal. In the dataset, the bearing vibration signal is collected at 48\;KHz sampling rate  and  16\;bit quantization, and hence the corresponding task data size is 768\;kilobits. The long-term accuracy requirement of the service is set to 0.8.  \emph{Type II service}: a service extended from the Type I service to diagnose facility fault based on a low-grade bearing vibration dataset while requiring higher inference accuracy 0.9. The low-grade dataset collects vibration signal at a lower sampling rate of 32\;KHz, and hence the task data size is 512\;kilobits. For both services, the task arrival rate of each industrial IoT device at each time slot follows a uniform distribution $\mathcal{U}(\lambda -0.5, \lambda+0.5)$, where $\lambda$ is the average task arrival rate.} We consider four candidate sampling rates for industrial IoT devices, which are 25\%, 50\%, 75\% and 100\% of the raw sampling rate. The corresponding accuracy with respect to the sampling rates are 0.59, 0.884, 0.950 and 0.987, respectively, based on extensive experiments on the bearing vibration dataset~\cite{dataset}.  Balance parameter $V$ is set to 0.05 based on extensive simulations. Other important simulation parameters are listed in Table~\ref{Table: Simulation parameters}. The parameters of the proposed algorithm are given in Table~\ref{Neural network parameters}. %Note that the state should be normalized before fed into learning algorithms for better performance. 
The proposed algorithm  is compared to the following benchmarks: 
\begin{itemize}
	\item \textbf{Delay myopic}: Each industrial IoT device dynamically makes  sampling rate selection and task offloading decisions by maximizing the one-step reward in \eqref{learning_reward} according to the network state.
	\item \textbf{Static configuration}: Each industrial IoT device takes a static configuration on the sampling rate selection and task offloading decisions, which can guarantee services' accuracy requirements.  
\end{itemize}

%The path loss model for wireless channel in industrial scenarios is given by 
%\begin{equation}
%PL=20\log_{10}\left(\frac{4\pi d_0}{\lambda}\right)+10\alpha\left(B_{n}^t \right)\log_{10}\left(\frac{d}{d_0}\right)+\sigma
%\end{equation}
%where $\lambda$ is signal wavelength, $d_0$ is the reference distance, $d$ is the distance, and $\xi \in N(0, \sigma^2_\xi)$ denotes the shadowing fading with a variance of $\sigma^2_\xi$. 

%\begin{figure}[t]
%	\centering
%	\renewcommand{\figurename}{Fig.}
%	\includegraphics[width=0.3 \textwidth]{fig/delay_traffic_9_new.eps}
%	\caption{Average delay performance of SARA in the training stage.}
%	%	\vspace{-0.5cm}
%	\label{Fig:delay_convergency}
%\end{figure}
%\begin{figure}[t]
%	\centering
%	\renewcommand{\figurename}{Fig.}
%	\includegraphics[width=0.3 \textwidth]{fig/training_accuracy.eps}
%	\caption{Accuracy performance of two types of service in the training stage.}
%	%	\vspace{-0.5cm}
%	\label{Fig:accuracyA_traffic_new.png}
%\end{figure}

\subsection{Performance Evaluation}
\subsubsection{Convergence of the proposed algorithm}
The service delay performance in the training stage is shown in Fig.~\ref{Fig:delay_convergency}. We can clearly see that the average service delay gradually decreases as the increase of training episodes, which validates the convergence of the proposed  algorithm. In addition, Fig.~\ref{Fig:accuracyA_traffic_new.png} shows the accuracy performance for both services with respect to training episodes.  {The accuracy performance is not good at the beginning of the training stage, but after 1,000 episodes of training, the accuracy performance converges to the pre-determined requirements. }

%\begin{figure*}[h]
%%	\vspace{-0.2cm}
%	\centering
%	\renewcommand{\figurename}{Fig.}	
%	\begin{subfigure}[Impact of traffic arrival rate on the delay]{
%			\label{Fig:delay_intensity}
%			\includegraphics[width=0.25\textwidth]{fig/Delay_vs_Traffic.pdf}}
%	\end{subfigure}
%	~
%	\begin{subfigure}[Impact of traffic arrival rate on the accuracy]{
%			\label{Fig:Accuracy_vs_Traffic}
%			\includegraphics[width=0.43\textwidth]{fig/Accuracy_vs_Traffic.pdf}}
%	\end{subfigure}
%	~
%	\begin{subfigure}[Impact of communication bandwidth on the delay]{
%			\label{Fig:delay_bandwidth}
%			\includegraphics[width=0.25\textwidth]{fig/delay_vs_bandwidth.pdf}}
%	\end{subfigure}
%%	\vspace{-0.2cm}
%	\caption{Performance comparison of the proposed algorithm with benchmark schemes.}
%	\label{Fig:transmission_performance}
%%	\vspace{-0.55cm}
%\end{figure*}

%\begin{figure}[t]
%	\centering
%	\renewcommand{\figurename}{Fig.}
%	\includegraphics[width=0.4 \textwidth]{fig/accuracyB_traffic_new.png}
%	\caption{Accuracy performance of the type B service with respect of training episodes  (figure will be update later).}
%	%	\vspace{-0.5cm}
%	\label{Fig:accuracyB_traffic_new.png}
%\end{figure}
%\subsubsection{Performance in the Training Stage}
\subsubsection{Impact of task arrival rate}

\begin{figure}[t]
	\centering
	\renewcommand{\figurename}{Fig.}
	\includegraphics[width=0.35\textwidth]{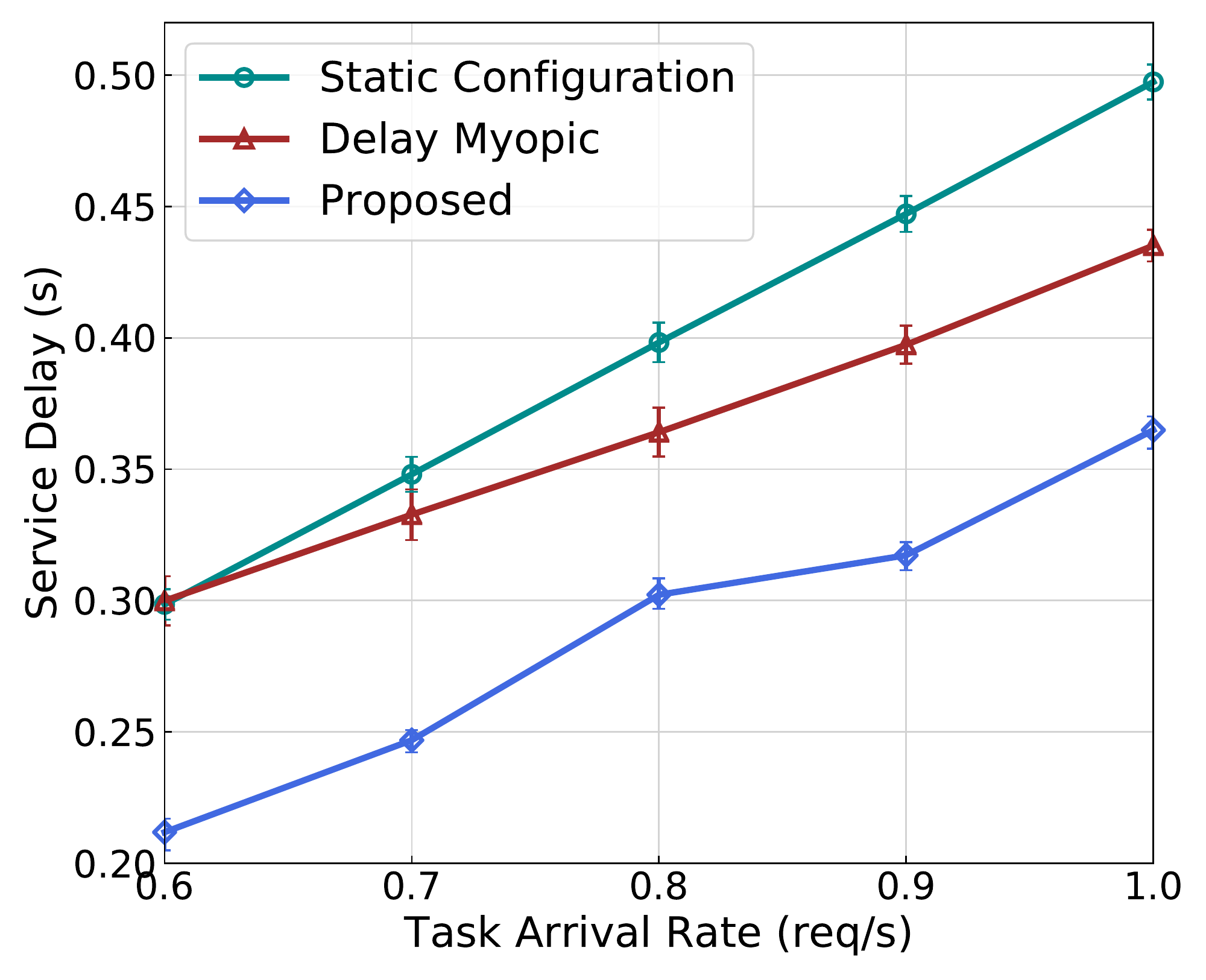}
	\caption{Service delay performance with respect to task arrival rates.}
		\vspace{-0.2cm}
	\label{Fig:delay_intensity}
\end{figure}

\begin{figure}[t]
	\centering
	\renewcommand{\figurename}{Fig.}
	\includegraphics[width=0.45 \textwidth]{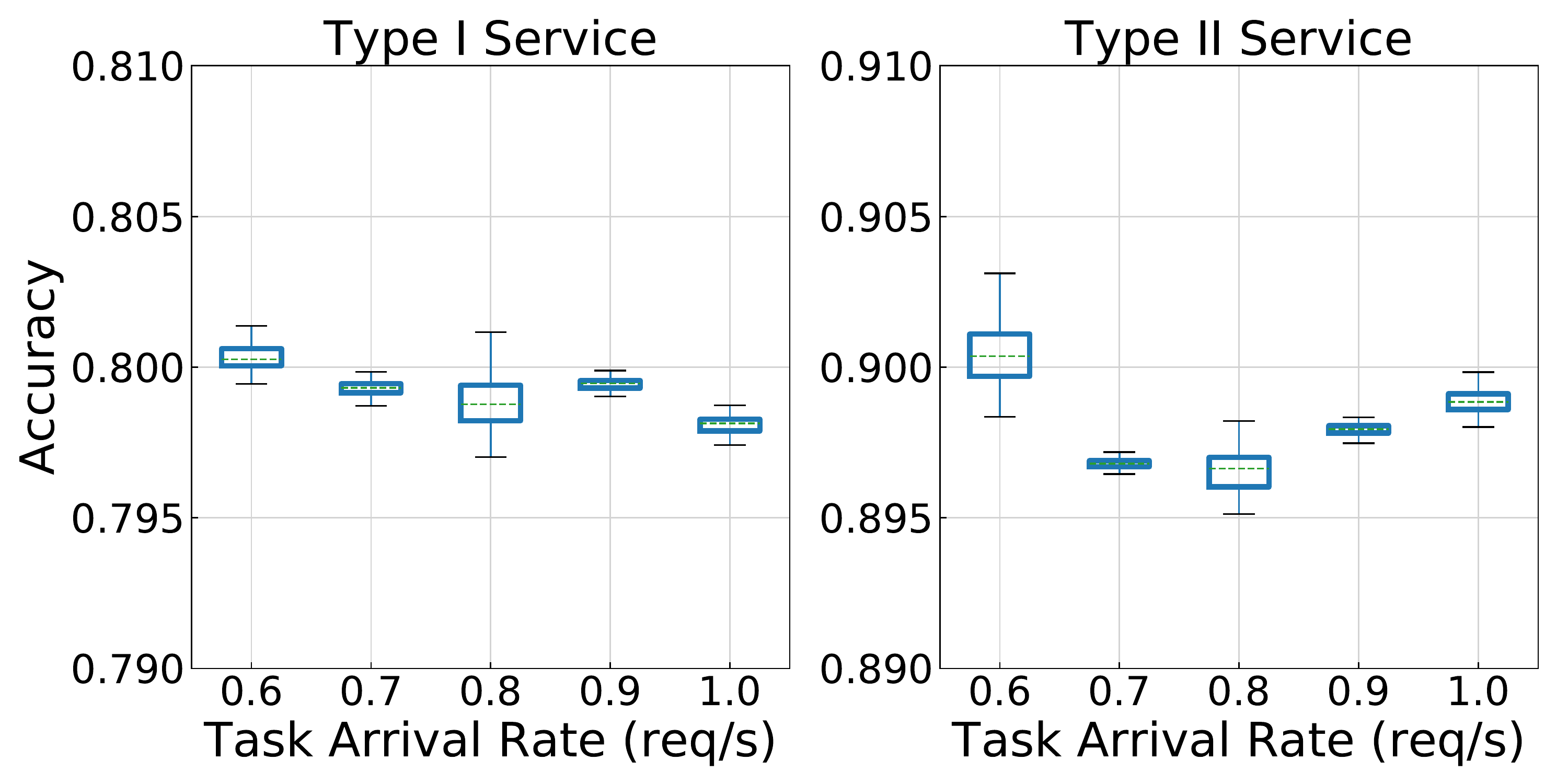}
	\caption{ Inference accuracy performance with respect to task arrival rates.}
		\vspace{-0.2cm}
	\label{Fig:Accuracy_vs_Traffic}
\end{figure}
Once well-trained offline, we evaluate the performance of the proposed algorithm in the online inference.  As shown in Fig.~\ref{Fig:delay_intensity}, we compare the average service delay performance of the proposed algorithm with benchmark schemes in terms of task arrival rates for $W=20$\;MHz. Each simulation point is plotted with a 95\% confidence interval. Several observations can be obtained from the figure. Firstly, the service delay increases with the task arrival rate due to constrained communication and computing resources in the network. Secondly, the proposed algorithm significantly outperforms benchmark schemes. {The reason is that the proposed RL-based algorithm can capture network dynamics, such as the task arrival pattern and channel condition variation, via interacting with the environment. The learned knowledge is utilized to make online decisions that target at the long-term performance, while benchmark schemes only focus on the short-term performance and do not adapt to network dynamics.} Specifically, the proposed algorithm can reduce the average service delay by 19\% and 25\%, respectively, as compared with delay myopic and  static configuration schemes. 

As shown in  Fig.~\ref{Fig:Accuracy_vs_Traffic}, boxplot accuracy distribution of two services is presented with respect to different task arrival rates. {The long-term accuracy requirements for two services are 0.8 and 0.9, respectively.} It can be seen that the proposed algorithm guarantees the long-term accuracy requirements of both services with a high probability. Specifically, the maximum error probability is less than 0.5\%.

%Secondly, as the traffic intensity increases, the proposed DDPG solution achieves more performance gain. Detailed xx\% ; reason

\subsubsection{Impact of communication bandwidth}
\begin{figure}[t]
	\centering
	\renewcommand{\figurename}{Fig.}
	\includegraphics[width=0.35 \textwidth]{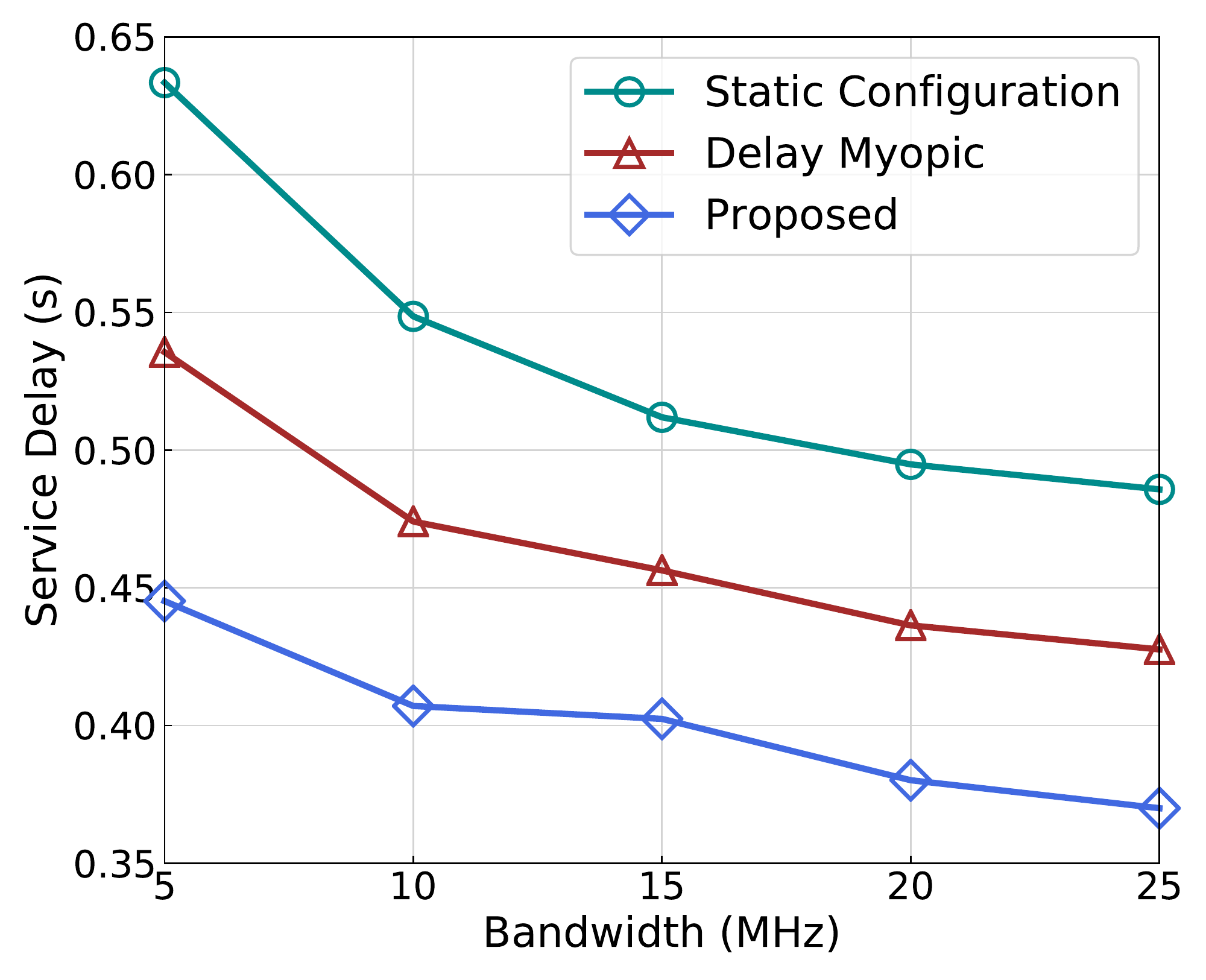}
	\caption{Service delay performance with respect to communication bandwidth.}
		\vspace{-0.2cm}
	\label{Fig:delay_bandwidth}
\end{figure}

\begin{figure}[t]
	\centering
	\renewcommand{\figurename}{Fig.}
	\includegraphics[width=0.35 \textwidth]{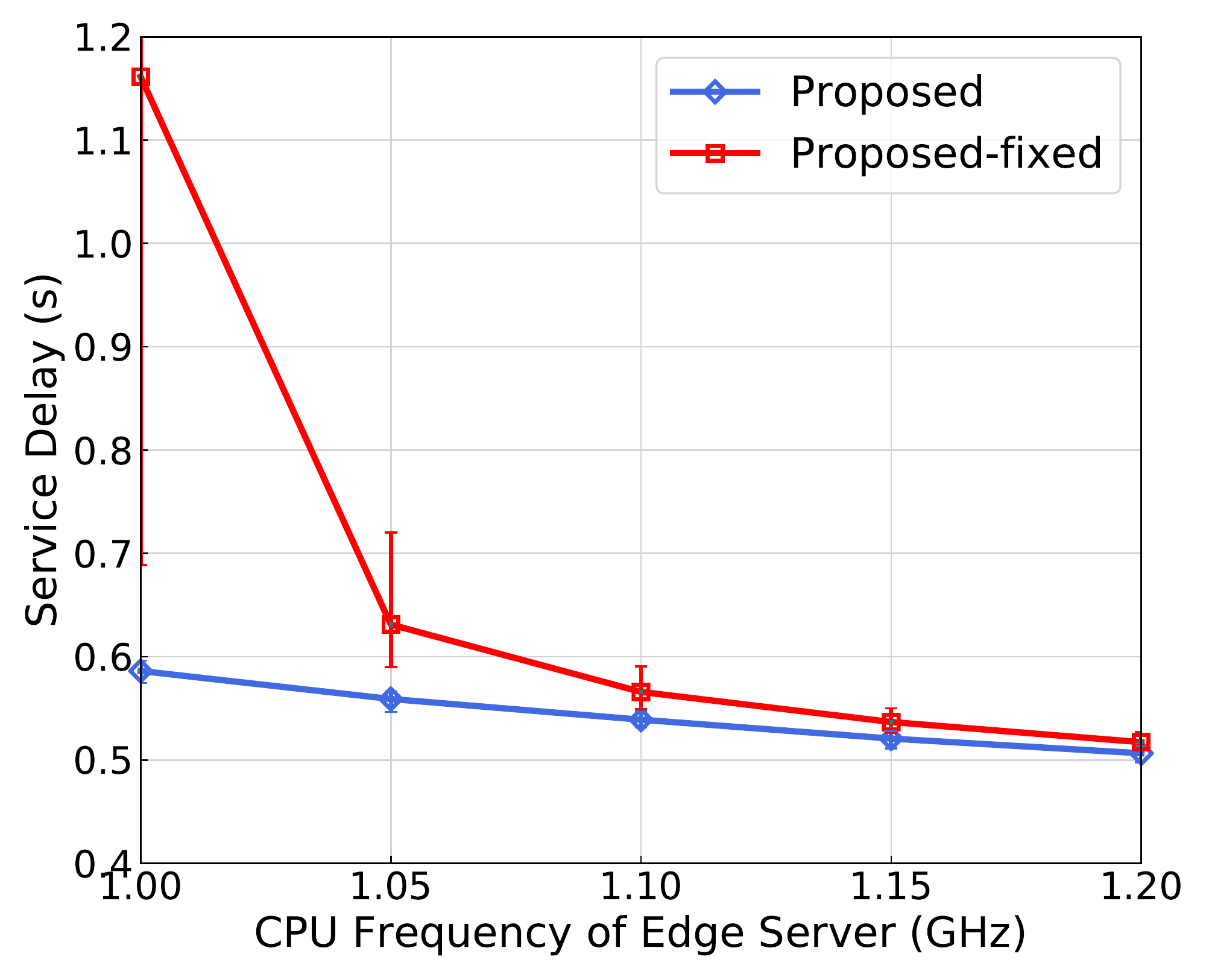}
	\caption{Service delay in terms of CPU frequency of the edge server.}
	\vspace{-0.2cm}
	\label{Fig:Delay_vs_Computing}
\end{figure}

 Fig. \ref{Fig:delay_bandwidth} shows the impact of communication bandwidth on the average service delay. Firstly, we can see that the average service delay decreases as the growth of bandwidth. The reason is that the transmission delay is reduced when the communication resource becomes sufficient. In addition, the proposed algorithm achieves good performance when the bandwidth is scarce. When system bandwidth is only 5\;MHz, the proposed algorithm achieves 1.20$\times$ and 1.42$\times$ delay reduction  compared with delay myopic and static configuration schemes, respectively, which is larger than that when the system bandwidth is 25\;MHz (1.15$\times$ and 1.31$\times$).  The reason is that the proposed algorithm efficiently utilizes the on-board computing resources. Simulation results show that the proposed algorithm decides 47.5\% computation tasks to be executed locally with 5\;MHz bandwidth, while the delay myopic benchmark only decides 17\%. Due to the efficient resource orchestration among industrial IoT devices and the AP, the proposed algorithm can  effectively reduce average service delay for both services. 

\subsubsection{Impact of  optimization subroutine}
{As shown in Fig.~\ref{Fig:Delay_vs_Computing}, we evaluate the performance of the proposed algorithm with the fixed computing resource allocation (referred to as proposed-fixed), in which the edge computing resource is allocated based on the average computing demand of two services. Compared with the proposed-fixed solution, the proposed algorithm achieves significant performance gain when the edge computing resource is constrained. Specifically, the performance gain in reducing the service delay decreases from 1.98$\times$ at 1\;GHz CPU frequency to only 1.02$\times$   at 1.2\;GHz CPU frequency. The reason is that efficient resource allocation is  more important in resource-constrained scenarios, as compared to resource-rich scenarios. The results validate the effectiveness of the optimization subroutine for edge computing resource allocation. In addition to the performance gain, another merit of the optimization subroutine is to reduce the training complexity of RL algorithms.}

\section{Conclusion}\label{sec: conclusion}
In this paper, we have studied the sampling rate adaption and resource allocation problem for collaborative DNN inference in industrial IoT networks. A deep RL-based algorithm has been developed to determine the channel variation and the task arrival pattern which are then exploited to provide accuracy-guaranteed DNN inference services. 
The proposed algorithm can optimize service delay performance on the fly, without requiring statistic information of network dynamics. The Lyapunov-based transformation technique can be applied to other CMDPs. For the future work, we will investigate the impact of device mobility on the inference performance.
%In this paper, we have studied the sampling rate adaption and resource allocation problem for collaborative DNN inference in industrial IoT networks. A learning-based algorithm has been developed to learn the channel variation and task arrival pattern. The learned knowledge is then exploited to provide accuracy-guaranteed DNN inference services. The proposed algorithm can optimize network performance on the fly, without requiring statistic information of network dynamics. For the future work, we will investigate the impact of device mobility on the inference performance.
%In this paper, we have investigated the BA problem in mmwave systems to find the optimal beam pair. We have developed HBA, a learning algorithm which leverages the inherent correlation structure among beams and the prior knowledge on the channel fluctuation to accelerate the BA process. The proposed HBA algorithm can identify the optimal beam with a high probability using a small number of beam measurements, even when the number of beams is large. HBA can be applied to meet the demand of delay-sensitive Gbps applications, such as cordless virtual reality gaming. Beyond the BA problem, the design principle of leveraging correlation structure is useful in other  optimization problems in wireless networks, such as power allocation and interference mitigation. 
%\appendix
%\subsection{Proof of Theorem \ref{lemma:unimodality_reward_function}}\label{AppA}

\appendix
\subsection{Proof of Theorem \ref{theorem: optimal_computing_resource_allocation}}\label{appendix:theorem: optimal_computing_resource_allocation}
%Two steps are conducted to obtain the optimal computing resource allocation. %Note that the following optimization for computing resource at time slot t. 

Firstly, the problem is  proved to be a convex optimization problem. For brevity of notations, we omit $t$ in the proof. With the definition of $\Lambda_m$ in \eqref{equ:Lambda_definition}, the objective function can be rewritten as $\sum_{m\in \mathcal{M}}{\Lambda_m}/{\left( c_m f_b\right)}$. The second-order derivative of the objective function shows   $2\Lambda_m/\left(f_b c_m^3\right) >0$. In addition, the inequality constraint is linear.  Hence, the problem is a convex optimization problem.

Secondly, a Lagrange function for the problem without considering the inequality constraints is constructed,  i.e.,
\begin{equation}
\mathcal{L}\left(\mathbf{c}, {a} \right)=\sum_{m\in \mathcal{M}}\frac{\Lambda_m}{ c_m f_b}+a\left(\sum_{m\in \mathcal{M}}c_m-1 \right),
\end{equation}  
where $a$ denotes the Lagrange multiplier. Based on Karush-Kuhn-Tucker conditions~\cite{boyd2004convex}, we have 
\begin{equation}
\begin{split}
\frac{\partial L\left(\mathbf{c}, {a} \right) }{\partial c_m}=-\frac{\Lambda_m}{f_b c_m^2}+a=0, \forall m\in  \mathcal{M}.
\end{split}
\end{equation}
By solving the above equation, we can obtain $c_m^\star=\sqrt{{\Lambda_m}/{a f_b}},  \forall m\in  \mathcal{M}.$
Substituting the above result into the complementary slackness condition $\sum_{m\in \mathcal{M}}c_m^\star-1=0$, the optimal value of $a$ is given by $a^\star ={\left(\sum_{m\in \mathcal{M}}\sqrt{\Lambda_m} \right)^2}/{f_b}$. From the above equation, $a^\star$ takes a positive value, and hence $\{c_m^\star\}_{ m\in \mathcal{M}}$ are positive values, which means constraint \eqref{equ:P2_c2}, i.e., $ c_m^t\geq 0, \forall m\in \mathcal{M}$, is automatically satisfied. Substituting $a^\star$ into the complementary slackness condition proves Theorem~\ref{theorem: optimal_computing_resource_allocation}.

\bibliographystyle{IEEEtran}
\bibliography{security}

% Generated by IEEEtran.bst, version: 1.14 (2015/08/26)
\begin{thebibliography}{10}
\providecommand{\url}[1]{#1}
\csname url@samestyle\endcsname
\providecommand{\newblock}{\relax}
\providecommand{\bibinfo}[2]{#2}
\providecommand{\BIBentrySTDinterwordspacing}{\spaceskip=0pt\relax}
\providecommand{\BIBentryALTinterwordstretchfactor}{4}
\providecommand{\BIBentryALTinterwordspacing}{\spaceskip=\fontdimen2\font plus
\BIBentryALTinterwordstretchfactor\fontdimen3\font minus
  \fontdimen4\font\relax}
\providecommand{\BIBforeignlanguage}[2]{{%
\expandafter\ifx\csname l@#1\endcsname\relax
\typeout{** WARNING: IEEEtran.bst: No hyphenation pattern has been}%
\typeout{** loaded for the language `#1'. Using the pattern for}%
\typeout{** the default language instead.}%
\else
\language=\csname l@#1\endcsname
\fi
#2}}
\providecommand{\BIBdecl}{\relax}
\BIBdecl

\bibitem{hu2017intelligent}
H.~Hu, B.~Tang, X.~Gong, W.~Wei, and H.~Wang, ``Intelligent fault diagnosis of
  the high-speed train with big data based on deep neural networks,''
  \emph{IEEE Trans. Ind. Informat.}, vol.~13, no.~4, pp. 2106--2116, Apr. 2017.

\bibitem{gobieski2019intelligence}
G.~Gobieski, B.~Lucia, and N.~Beckmann, ``{Intelligence beyond the edge:
  Inference on intermittent embedded systems},'' in \emph{Proc. ASPLOS}, 2019,
  pp. 199--213.

\bibitem{chen2016eyeriss}
Y.~Chen, T.~Krishna, J.~S. Emer, and V.~Sze, ``Eyeriss: An energy-efficient
  reconfigurable accelerator for deep convolutional neural networks,''
  \emph{IEEE J. Solid-State Circuits}, vol.~52, no.~1, pp. 127--138, Jan. 2017.

\bibitem{chekired2018industrial}
D.~A. Chekired, L.~Khoukhi, and H.~T. Mouftah, ``{Industrial IoT data
  scheduling based on hierarchical fog computing: A key for enabling smart
  factory},'' \emph{IEEE Trans. Ind. Informat.}, vol.~14, no.~10, pp.
  4590--4602, Oct. 2018.

\bibitem{li2019edge}
E.~Li, L.~Zeng, Z.~Zhou, and X.~Chen, ``Edge {AI: On}-demand accelerating deep
  neural network inference via edge computing,'' \emph{IEEE Trans. Wireless
  Commun.}, vol.~19, no.~1, pp. 447--457, Jan. 2020.

\bibitem{han2015deep}
S.~Han, H.~Mao, and W.~J. Dally, ``Deep compression: {Compressing} deep neural
  networks with pruning, trained quantization and huffman coding,'' \emph{arXiv
  preprint arXiv:1510.00149}, 2015.

\bibitem{teerapittayanon2016branchynet}
S.~Teerapittayanon, B.~McDanel, and H.~Kung, ``{Branchynet: Fast} inference via
  early exiting from deep neural networks,'' in \emph{Proc. IEEE ICPR}, 2016,
  pp. 2464--2469.

\bibitem{dataset}
{International Data Corporation}, ``Case western reserve university,''
  [Online]. Available:
  https://csegroups.case.edu/bearingdatacenter/pages/download-data-file.

\bibitem{chen2017learning}
G.~Chen, W.~Choi, X.~Yu, T.~Han, and M.~Chandraker, ``Learning efficient object
  detection models with knowledge distillation,'' in \emph{Proc. NIPS}, 2017,
  pp. 742--751.

\bibitem{yang2019edge}
P.~Yang, F.~Lyu, W.~Wu, N.~Zhang, L.~Yu, and X.~Shen, ``Edge coordinated query
  configuration for low-latency and accurate video analytics,'' \emph{IEEE
  Trans. Ind. Informat.}, vol.~16, no.~7, pp. 4855--4864, Jul. 2020.

\bibitem{wang2019delay}
S.~Wang, Y.~Guo, N.~Zhang, P.~Yang, A.~Zhou, and X.~Shen, ``{Delay-aware
  microservice coordination in mobile edge computing: A reinforcement learning
  approach},'' \emph{IEEE Trans. Mobile Comput.}, DOI:
  10.1109/TMC.2019.2957804, 2019.

\bibitem{shen2020ai}
X.~Shen, J.~Gao, W.~Wu, K.~Lyu, M.~Li, W.~Zhuang, X.~Li, and J.~Rao,
  ``{AI}-assisted network-slicing based next-generation wireless networks,''
  \emph{IEEE Open J. Veh. Technol.}, vol.~1, no.~1, pp. 45--66, Jan. 2020.

\bibitem{yang2018content}
P.~Yang, N.~Zhang, S.~Zhang, L.~Yu, J.~Zhang, and X.~Shen, ``Content popularity
  prediction towards location-aware mobile edge caching,'' \emph{IEEE Trans.
  Multimedia}, vol.~21, no.~4, pp. 915--929, Apr. 2019.

\bibitem{wang2020online}
K.~Wang, Y.~Zhou, Z.~Liu, Z.~Shao, X.~Luo, and Y.~Yang, ``Online task
  scheduling and resource allocation for intelligent {NOMA}-based industrial
  {Internet} of things,'' \emph{IEEE J. Sel. Areas Commun.}, vol.~38, no.~5,
  pp. 803--815, May 2020.

\bibitem{altman1999constrained}
E.~Altman, \emph{Constrained Markov decision processes}.\hskip 1em plus 0.5em
  minus 0.4em\relax CRC Press, 1999, vol.~7.

\bibitem{liang2018accelerated}
Q.~Liang, F.~Que, and E.~Modiano, ``Accelerated primal-dual policy optimization
  for safe reinforcement learning,'' \emph{arXiv preprint arXiv:1802.06480},
  2018.

\bibitem{lei2016optimal}
L.~Lei, Y.~Kuang, X.~Shen, K.~Yang, J.~Qiao, and Z.~Zhong, ``Optimal
  reliability in energy harvesting industrial wireless sensor networks,''
  \emph{IEEE Trans. Wireless Commun.}, vol.~15, no.~8, pp. 5399--5413, Aug.
  2016.

\bibitem{bernstein2014containers}
D.~Bernstein, ``Containers and cloud: From lxc to docker to kubernetes,''
  \emph{IEEE Cloud Computing}, vol.~1, no.~3, pp. 81--84, Sep. 2014.

\bibitem{krizhevsky2012imagenet}
A.~Krizhevsky, I.~Sutskever, and G.~E. Hinton, ``Imagenet classification with
  deep convolutional neural networks,'' in \emph{Proc. NIPS}, 2012, pp.
  1097--1105.

\bibitem{neely2010stochastic}
M.~J. Neely, ``Stochastic network optimization with application to
  communication and queueing systems,'' \emph{Synthesis Lectures on
  Communication Networks}, vol.~3, no.~1, pp. 1--211, 2010.

\bibitem{luo2019adaptive}
J.~Luo, F.~Yu, Q.~Chen, and L.~Tang, ``Adaptive video streaming with edge
  caching and video transcoding over software-defined mobile networks: {A} deep
  reinforcement learning approach,'' \emph{IEEE Trans. Wireless Commun.},
  vol.~19, no.~3, pp. 1577--1592, Mar. 2020.

\bibitem{xu2018joint}
J.~Xu, L.~Chen, and P.~Zhou, ``Joint service caching and task offloading for
  mobile edge computing in dense networks,'' in \emph{Proc. IEEE INFOCOM},
  2018, pp. 207--215.

\bibitem{lillicrap2015continuous}
T.~P. Lillicrap, J.~J. Hunt, A.~Pritzel, N.~Heess, T.~Erez, Y.~Tassa,
  D.~Silver, and D.~Wierstra, ``Continuous control with deep reinforcement
  learning,'' in \emph{Proc. ICLR}, 2016.

\bibitem{petrov2017vehicle}
V.~Petrov, A.~Samuylov, V.~Begishev, D.~Moltchanov, S.~Andreev, K.~Samouylov,
  and Y.~Koucheryavy, ``{Vehicle-based relay assistance for opportunistic
  crowdsensing over narrowband IoT (NB-IoT}),'' \emph{IEEE Internet of Things
  J.}, vol.~5, no.~5, pp. 3710--3723, Oct. 2018.

\bibitem{wu2019fast}
W.~Wu, N.~Cheng, N.~Zhang, P.~Yang, W.~Zhuang, and X.~Shen, ``Fast mmwave beam
  alignment via correlated bandit learning,'' \emph{IEEE Trans. Wireless
  Commun.}, vol.~18, no.~12, pp. 5894--5908, Dec. 2019.

\bibitem{boyd2004convex}
S.~Boyd, S.~P. Boyd, and L.~Vandenberghe, \emph{Convex optimization}.\hskip 1em
  plus 0.5em minus 0.4em\relax Cambridge university press, 2004.

\end{thebibliography}

\end{document}